\pdfminorversion=4
\documentclass[envcountsect,envcountsame]{fac}

\usepackage{FAOC2020-SACM}

\allowdisplaybreaks

\begin{document}

\title{Integration of Formal Proof into Unified Assurance Cases with
  \isacm}
\author{Simon Foster, Yakoub Nemouchi, Mario Gleirscher, Ran Wei, and Tim Kelly}
\correspond{Simon Foster, Email: \href{mailto:simon.foster@york.ac.uk}{simon.foster@york.ac.uk}.}
\makecorrespond
\maketitle

\begin{abstract}
  Assurance cases are often required to certify critical systems.  The
  use of formal methods in assurance can improve automation, increase
  confidence, and overcome errant reasoning.  However, assurance cases
  can never be fully formalised, as the use of formal methods is
  contingent on models that are validated by informal processes.
  Consequently, assurance techniques should support both formal and
  informal artifacts, with explicated inferential links between them.
  In this paper, we contribute a formal machine-checked interactive
  language, called \isacm, supporting the computer-assisted
  construction of assurance cases compliant with the OMG Structured
  Assurance Case Meta-Model. The use of \isacm guarantees
  well-formedness, consistency, and traceability of assurance cases,
  and allows a tight integration of formal and informal evidence of
  various provenance. In particular, Isabelle brings a diverse range
  of automated verification techniques that can provide evidence. To
  validate our approach, we present a substantial case study based on
  the Tokeneer secure entry system benchmark. We embed its functional
  specification into Isabelle, verify its security requirements, and
  form a modular security case in \isacm that combines the
  heterogeneous artifacts. We thus show that Isabelle is a suitable
  platform for critical systems assurance.
\end{abstract}

\begin{keywords}
 Assurance cases; Integrated Formal Methods; Proof Assistants; Safety
 Cases; Common Criteria
\end{keywords}

\section{Introduction}
\label{sec:intro}
\Acp{ac} are structured arguments, supported by evidence, intended to
demonstrate that a system meets its requirements, such as safety or
security, when applied in a particular operational
context~\cite{Wei2019-SACM,Kelly1998}.  They are recommended by
several international standards, such as ISO~26262 for automotive
applications.  An \ac{ac} consists of a hierarchical decomposition of
claims, through appropriate argumentation strategies, into further
claims, and eventually supporting evidence. Several \ac{ac} languages
exist, including the \gsn~\cite{Kelly1998}, Claims, Arguments, and
Evidence (CAE)~\cite{CAE}, and the \ac{sacm}~\cite{sacm,Wei2019-SACM},
which unifies several notations.

\Ac{ac} creation can be supported by model-based design, which
utilises architectural and behavioural models over which requirements
can be formulated~\cite{Diskin2018,Wei2019-SACM}.  However, \acp{ac}
can suffer from logical fallacies and inadequate
evidence~\cite{Greenwell2006}.
Moreover, although notations such as \gsn or CAE have their merits in
supporting the management of complex \acp{ac} via their hierarchical
decomposition and modular
representation~\cite{Gleirscher2017-ArguingHazardAnalysis}, improving
the comprehensibility and the maintenance of an assurance argument,
these notations have been criticised for ambiguities arising from the
existing
guidance~(e.g.~\cite{Denney2015-TowardsFormalBasis,Diskin2018}).

A proposed solution is formalisation in a machine-checked logic to
enable verification of consistency and
well-formedness~\cite{Rushby2013,Rushby2014}.  As confirmed by
avionics standard DO-178C, the evidence gathering process can also
benefit from the rigour of \acp{fm}~\cite{DO-333}. However, it is
also the case that (1) \acp{ac} are intended primarily for human
consumption, and (2) that formal models must be validated
informally~\cite{Habli2014}.  Consequently, \acp{ac} usually combine
informal and formal content, and so tools must support this.
Moreover, there is a need to integrate several
\acp{fm}~\cite{Paige1997FM-IntegratedFormalMethods}, potentially with
differing computational paradigms and levels of
abstraction~\cite{Hoare&98}.  So, for several reasons, it is paramount
to maintain change impact traceability across these heterogeneous
artifacts using \ac{fm}
integration~\cite{Gleirscher2018-NewOpportunitiesIntegrated}, going
beyond the possibilities of shallow hyperlinking of engineering data
as frequently seen in model-based engineering practice.

\begin{figure}
  \begin{center}
  \begin{tikzpicture}[dia,scale=.85,every node/.style={transform shape}]
    \node[itmooc] (arg) {Graphical\\Argument\\$\quad$};
    \node[lang] at (arg.south east) {e.g.~GSN/CAE tool};
    \node[tit] (argtit) at (arg.north west) {\S\ref{sec:prelim}};
    \node[itm,minimum width=3cm,text width=3cm] (isacm) at ($(arg)+(5,0)$) {Machine-checked\\
      SACM Model\\$\quad$};
    \node[tit] at (isacm.north west) {~~\S\ref{sec:isacm}};
    \node[tit] at (isacm.north west) {~~~~~~and~\S\ref{sec:tokassure}};
    \node[lang] at (isacm.south east) {Isabelle/SACM};
    \node[ann] (isacmann) at (isacm.south west) {heterogeneous and   
     \\ machine checked artifacts}; %
    \node[itm,minimum width=3cm,text width=3cm] (fm) at ($(isacm)+(5,1)$) {Integrated\\Formal Methods\\$\quad$};
    \node[lang] at (fm.south east) {Isabelle/UTP};
    \node[tit] at (fm.north west) {\S\ref{sec:model}};
    \node[itmooc,minimum width=3cm,text width=3cm] (infart) at ($(isacm)+(5,-1)$) {Other Resources\\$\quad$};
    \node[lang] at (infart.south east) {e.g.~ informal};

    \node[businessman,minimum size=1cm,align=center] (vereng) at
    ($(arg.west)+(-2,0)$) {Assurance\\Engineer};
    
    \draw[arrows={-latex},thick,tips=proper] 
    (vereng) edge[near start] node[lab] {uses} (arg)
    (vereng) edge[near start,bend right] node[lab] {uses}
    (isacm.south west)
    (arg) edge[bend left] node[lab,above] {translated to} (isacm)
    (isacm) edge node[lab] {feedback} (arg)
    (isacm) edge[] node {} (fm)
    (isacm) edge node[lab,above] {artifacts\\from} (infart)
    ;
  \end{tikzpicture}
\end{center}
\caption{Overview of our approach to integrative model-based assurance cases}
\label{fig:iFMAC}
\end{figure}
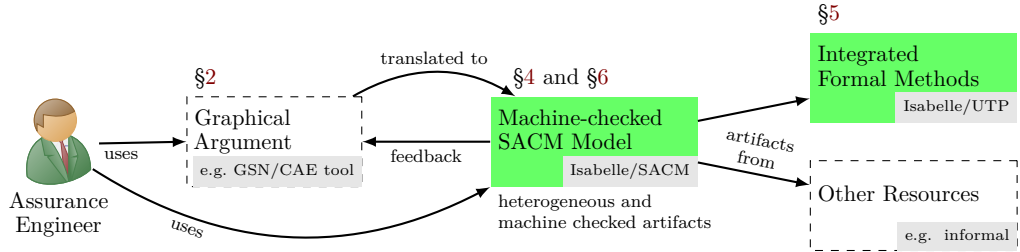

\paragraph{Vision}

Our vision, illustrated in Figure~\ref{fig:iFMAC}, is a unified
framework for machine-checked \acp{ac} with heterogeneous artifacts
and integrated \acp{fm}.  We envisage an assurance backend for a
variety of graphical assurance tools~\cite{Denney2018,Wei2019-SACM}
that utilise \ac{sacm}~\cite{sacm} as a unified interchange format,
and an array of FM tools provided by our verification platform,
\iutp~\cite{Foster2020-IsabelleUTP,Foster19a-IsabelleUTP,Foster16a}. Our
framework aims to improve existing processes by harnessing
formal verification to produce mathematically grounded \acp{ac} with
guarantees of consistency and adequacy of the evidence.  In the
context of safety regulation in domains such as intelligent
transportation, critical infrastructure, human-robot collaboration, or
medical devices, our framework can aid \ac{ac} evaluation
through machine-checking and automated
verification~\cite{Brucker2019-DOFCert,foster2020formal}.

\paragraph{Contributions}

A first step in this direction is made by the contributions of this paper, which are: (1) \isacm, an implementation of
SACM in Isabelle~\cite{Nipkow-Paulson-Wenzel:2002}, (2) a front-end for \isacm called interactive assurance language
(IAL), which is an interactive DSL for the definition of machine-checked SACM models, (3) a novel formalisation of
Tokeneer~\cite{Barnes2006-EngineeringTokeneerenclave} in \iutp, (4) the verification of the Tokeneer security
requirements\footnote{Supporting materials, including Isabelle theories, can be found on
  \href{http://www.cs.york.ac.uk/~simonf/FAOC2020}{our website}.}, and (5) the definition of a modular assurance case
capturing the lifecycle artifacts and the claims that Tokeneer meets its security requirements. Our Tokeneer assurance
case demonstrates how one can integrate formal artifacts, resulting from the work with \iutp (4), and informal
artifacts, such as the Tokeneer documentation.

Isabelle provides a sophisticated executable document model for
presenting a graph of hyperlinked formal artifacts, such as
definitions, theorems, and proofs. It provides automatic and incremental consistency checking, where updates to
artifacts trigger rechecking. Such capabilities can support efficient maintenance and evolution of model-based
\acp{ac}~\cite{Wei2019-SACM}.  Moreover, the document model allows both formal and informal
content~\cite{DBLP:journals/corr/abs-1811-10819}, and provides access to an array of automated proof
tools~\cite{Wenzel2007, DBLP:journals/corr/abs-1811-10819}.  Additionally, Brucker et
al.~\cite{DBLP:conf/mkm/BruckerACW18,Brucker2019-DOFDesign,Brucker2019-DOFCert} extend Isabelle with \idof, a framework
with a textual language for the embedding of ontologies and meta-models into the Isabelle document model, which we
harness to embed \ac{sacm}. For these reasons, we believe that Isabelle is an ideal platform both for assurance cases
and integration of formal methods.

Isabelle/UTP~\cite{Foster2020-IsabelleUTP} employs Unifying Theories
of Programming~\cite{Hoare&98} (UTP) to provide formal verification
facilities for a variety of languages, with paradigms as diverse as
concurrency~\cite{Foster17c}, real-time~\cite{Foster17b}, and hybrid
computation~\cite{Foster16b,Foster2020-dL}. Moreover, verification
techniques such as Hoare logic, weakest precondition calculus, and
refinement calculus are all available through a variety of proof
tactics. This makes \iutp an obvious choice for modelling and
verification of Tokeneer, and more generally as a platform for
integrated FMs based on unifying formal semantics.

We believe our novel mechanisation of the Tokeneer specification in a theorem prover is one of the most complete to date, with respect to
the original benchmark~\cite{TIS-FormalSpec}. The model includes 60 state variables, 38 top-level operations for user
entry, admin, and enrolment procedures, 30 invariants, and several hundred discharged invariant proof obligations. Where
possible, we are faithful to the benchmark, using the same names and structure for the system model. With our
mechanisation we are able to formally verify three security properties that could only be argued semi-formally in the
original documents~\cite{TIS-SecurityProperties}. Our work therefore demonstrates how automated proof tools have
advanced over the past fifteen years. We also highlight a few invariants missing from the original formal specification,
without which we could not verify Tokeneer.

This paper is an extension of a previous conference paper~\cite{Foster2019-iFM}. We develop a more elaborate modular
assurance case for Tokeneer (\S\ref{sec:tokeneer} and
\S\ref{sec:tokassure}), further develop our IAL~(\S\ref{sec:isacm}),
and formalise the Admin operations in the Tokeneer formal model and verify two additional security properties
(\S\ref{sec:model}). We also provide further implementation details and examples throughout, and in particular describe
our strategy for converting the Tokeneer Z schemas into Isabelle/UTP.

This article is organised as follows. In \S\ref{sec:prelim}, we outline preliminaries: SACM, Isabelle, and DOF.  In
\S\ref{sec:tokeneer} we describe the Tokeneer system. In \S\ref{sec:isacm}, we begin our contributions by describing
\isacm, which consists of the embedding of \ac{sacm} into \idof~(\S\ref{subsec:sacm-model}), and
IAL~(\S\ref{subsec:sacm-language}). In \S\ref{sec:model}, we model and verify Tokeneer in Isabelle/UTP. In
\S\ref{sec:tokassure}, we describe the mechanisation of the Tokeneer
AC in the ACME graphical assurance case tool
and
\isacm.  In \S\ref{subsec:relatedWork}, we indicate relationships to previous research.  After reflecting on our
approach in \S\ref{sec:disc}, we conclude in \S\ref{sec:conclusion}.

\section{Preliminaries}
\label{sec:prelim}
In this section, we provide background material on \acp{ac}, the SACM standard,
the Isabelle components, and \iutp, all required to follow our investigations in
\S\ref{sec:isacm}, \S\ref{sec:model}, and \S\ref{sec:tokassure}.

\subsection{Assurances Cases and SACM}

\begin{figure}
  \centering
  \includegraphics[width=.8\linewidth]{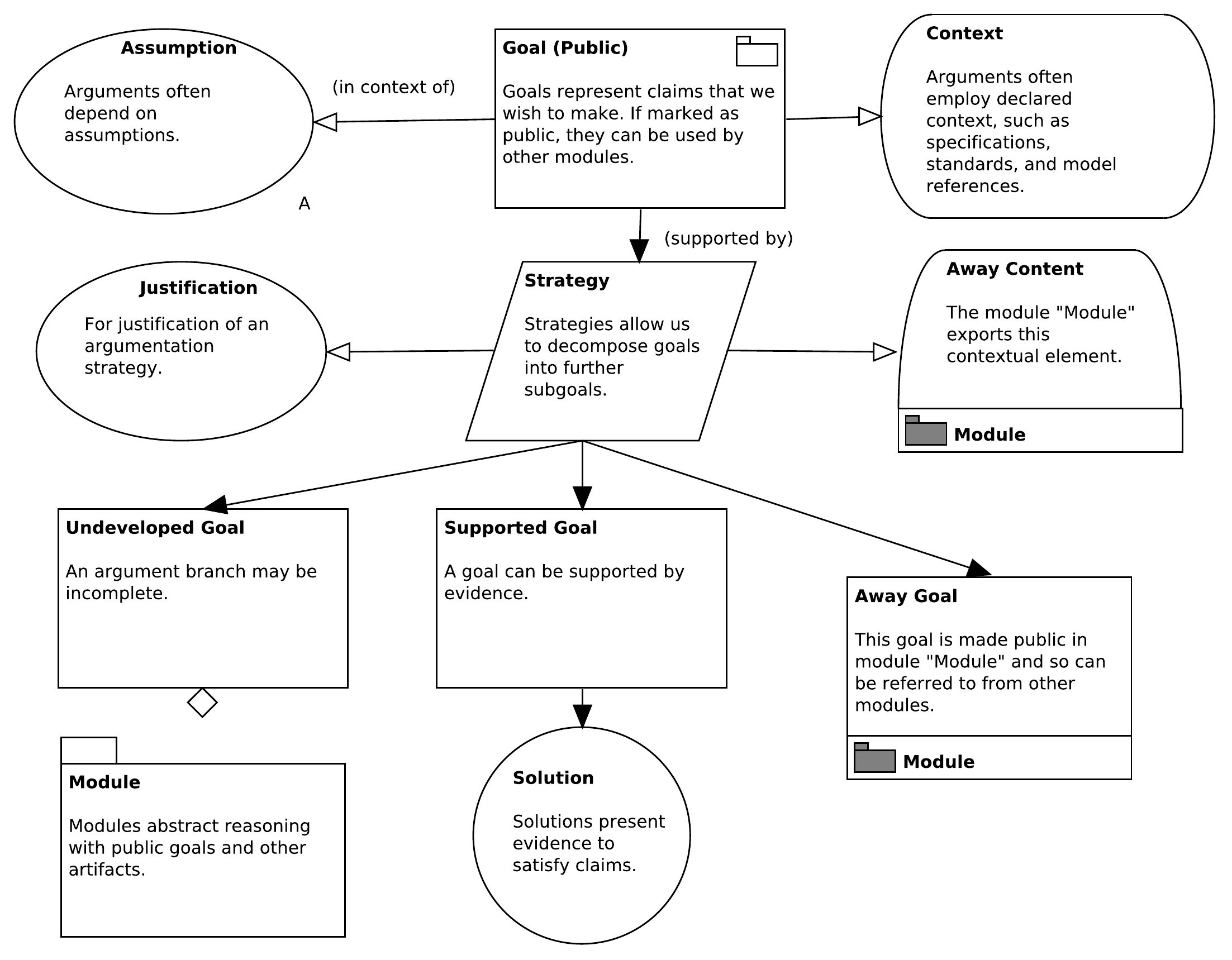}
  \caption{Goal Structuring Notation with Modular Extensions}
  \label{fig:gsn}
\end{figure}

Assurance cases are often presented using a graphical notation like GSN~\cite{Kelly1998} (Figure~\ref{fig:gsn}). In this
notation, claims are rectangles, which are linked with ``supported by'' arrows, strategies are parallelograms, and the
circles are evidence (``solutions''). The other shapes denote various types of context, which are linked to by the ``in
context of'' arrows. An argument in GSN proceeds from the most abstract claim down, through argumentation strategies and
further subgoals, until the claims can be directly supported by evidence. GSN also has a modular extension, where
arguments can be encapsulated in modules, and certain elements can be marked as public, meaning they are available to
other modules and can be cited using elements like ``away goals'' and ``away context''.

SACM is an OMG standard meta-model for ACs~\cite{Wei2019-SACM}. It unifies, extends, and refines several predecessor
notations, including GSN~\cite{Kelly1998} and CAE~\cite{CAE} (Claims, Arguments, and Evidence), and is intended as a
definitive reference model. SACM models three crucial concepts: arguments, artifacts, and terminology. An argument is a
set of claims, evidence citations, and inferential links between them. Artifacts represent evidence, such as system
models, techniques, results, activities, participants, and traceability links. Terminology fixes formal terms for use in
claims. Normally, claims are in natural language, but in SACM they can also contain structured expressions, which
allows integration of formal languages. Arguments, artifacts, and terminology can all be grouped into a number of
packages, which generalise GSN modules.

The argumentation meta-model of SACM is shown in Figure~\ref{fig:sacm-arg}. The base class is \textsf{ArgumentAsset},
which groups the argument assets, such as \textsf{Claim}s, \textsf{ArtifactReference}s, and
\textsf{AssertedRelationship}s (which are inferential links). Every asset may contain a \textsf{MultiLangString} that
provides a description, potentially in multiple natural and formal languages, and corresponds to contents of the shapes
in Figure~\ref{fig:gsn}.

\textsf{AssertedRelationship}s represent a relationship that exists between several assets. They can be of type
\textsf{AssertedContext}, which uses an artifact to define context; \textsf{AssertedEvidence}, which evidences a claim;
\textsf{AssertedInference} which describes explicit reasoning from premises to conclusion(s); or
\textsf{AssertedArtifactContext} which documents a dependency between the claims of two artifacts.

Both \textsf{Claim}s and \textsf{AssertedRelationship}s inherit from \textsf{Assertion}, because in SACM both claims and
inferential links are subject to argumentation and refutation. SACM allows six different classes of assertion, via the
attribute \textsf{assertionDeclaration}, including \textsf{axiomatic} (needing no further support), \textsf{assumed},
and \textsf{defeated}, where a claim is refuted. An \textsf{AssertedRelationship} can also be flagged as
\textsf{isCounter}, where counter evidence for a claim is presented.

\begin{figure}
  \centering
  \includegraphics[width=.9\linewidth]{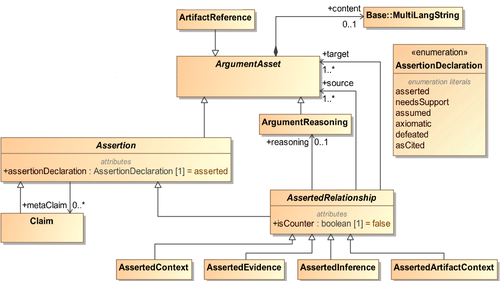}
  \caption{SACM Argumentation Meta-Model~\cite{sacm}}
  \label{fig:sacm-arg}
\end{figure}

For development of graphical assurance cases, we use an Eclipse-based tool called \textit{Assurance Case Management
  Environment} (ACME), from which we captured Figure~\ref{fig:gsn}. ACME supports the creation and management of
assurance cases using notations such as GSN (and in the future CAE), the abstract syntax of which is an extension of
Object Management Group's (OMG) Structured Assurance Case Metamodel (SACM), both are explained in detail
in~\cite{Wei2019-SACM}. ACME integrates a number of model management tools and frameworks, including Eclipse
Epsilon~\cite{kolovos2006eclipse}, Eclipse Hawk~\cite{barmpis2013hawk}, and Xtext~\cite{bettini2016implementing},
towards the management of fully model-based assurance cases. Using SACM's full potential and with the help of model
management frameworks, ACME currently supports 1) fine-grained traceability from an assurance case to its referenced
engineering models (defined in mainstream modelling technologies such
as, e.g.~UML) to the level of model elements; 2) traceability to
formal notations in Isabelle; 3) automated means to validate/verify traced engineering artifacts; 4) use and execution
of constrained natural language for model validation; 5) automated change impact analysis for assurance cases and their
engineering artifacts.

\subsection{Isabelle, Isar, and DOF}

\ihol is an interactive theorem prover for \ac{hol}~\cite{Nipkow-Paulson-Wenzel:2002}, based on the generic framework
Isar~\cite{Wenzel2007}. The former provides a functional specification language, and an array of automated proof
tools~\cite{Blanchette2011}. The latter has an interactive, extensible, and executable document
model~\cite{DBLP:journals/corr/abs-1811-10819}, which describes Isabelle theories. Plugins, such as \ihol, \idof, \iutp,
and \isacm have document models that contain conservative extensions to
Isar. %

\begin{wrapfigure}{r}{.25\linewidth}
    \includegraphics[width=\linewidth]{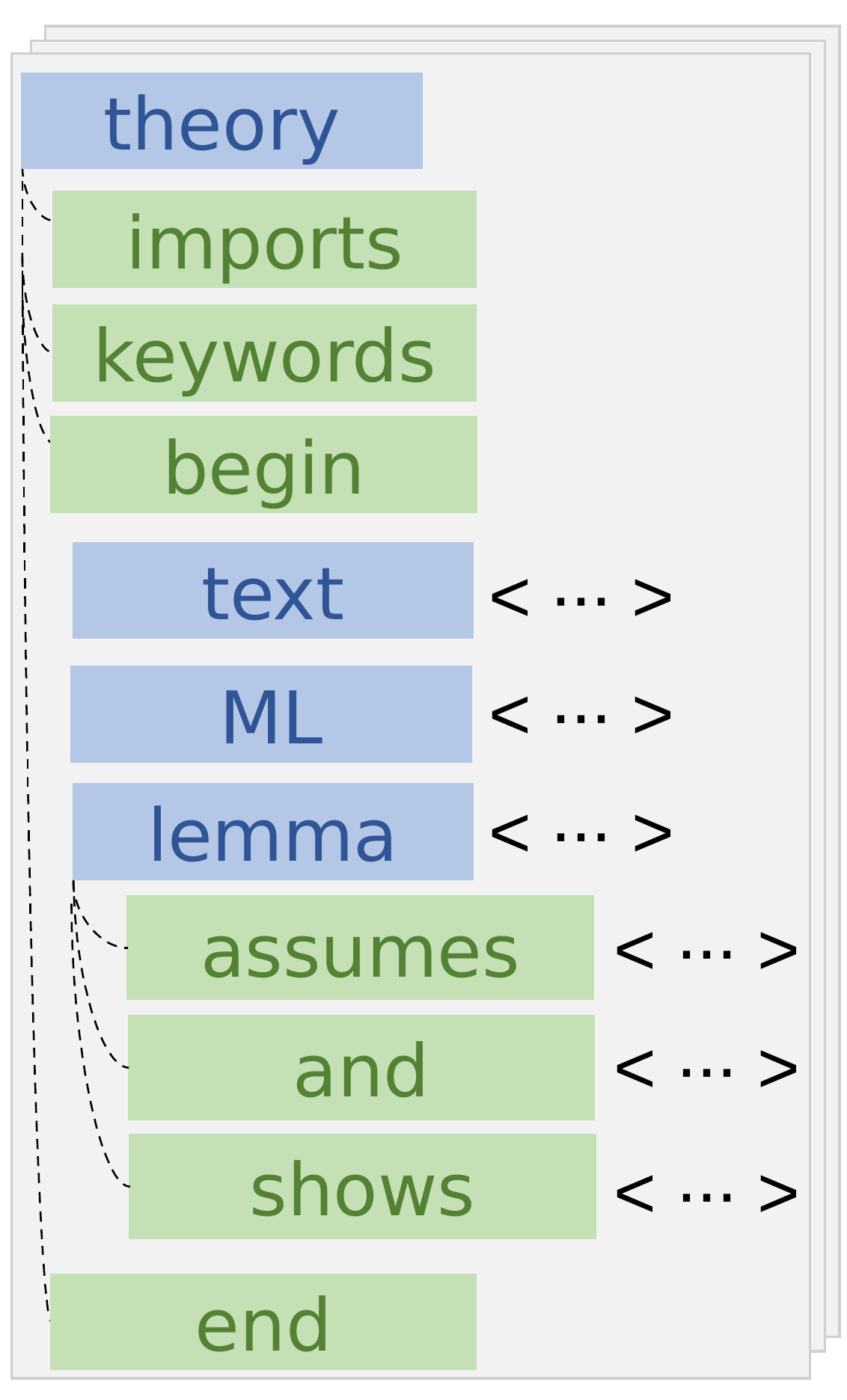}

    \caption{The Isabelle/Isar Document Model}
    \label{fig:doc-model}
\end{wrapfigure}

Figure~\ref{fig:doc-model} illustrates the document model. The first section for \emph{context definition} describes
\emph{imports} of existing theories, and \emph{keywords} which extend the concrete syntax. The second section is the
body enclosed between \emph{begin}-\emph{end}, which is a sequence of commands. The concrete syntax of commands consists
of (1) a pre-declared keyword (in \textcolor{Blue}{blue}), such as the command \textcolor{Blue}{\inlineisar+ML+}, (2) a
``semantics area'' enclosed between \inlineisar+\<open>...\<close>+, and (3) optional subkeywords (in
\textcolor{OliveGreen}{green}). Commands generate document elements. For example, the command
\textcolor{Blue}{\inlineisar+lemma+} creates a new theorem within the underlying theory context.  When a document is
edited by removal, addition, or alteration of elements, it is immediately executed and checked by Isabelle, with
feedback provided to the frontend. This includes consistency checks for the context and well-formedness checks for the
commands. Isabelle is therefore ideal for ACs, which have to be maintainable, well-formed, and consistent. In
\S\ref{subsec:sacm-language} we extend this document model with commands that define our assurance language.

Moreover, informal artifacts in Isabelle theories can be combined with formal artifacts using the command
\textcolor{Blue}{\inlineisar+text+} \inlineisar+\<open>...\<close>+. It is a processor for markup strings containing a
mixture of informal artifacts and hyperlinks to formal artifacts through \emph{antiquotations} of the form
\inlineisar+@{aqname ...}+. For example, \textcolor{Blue}{\inlineisar+text+} \inlineisar+\<open>The reflexivity theorem @{thm HOL.refl}\<close>+ mixes natural language with a hyperlink to the formal artifact \inlineisar+HOL.refl+ through
the antiquotation \inlineisar+@{thm HOL.refl}+. This is important since antiquotations are also checked by Isabelle as
follows: (1) whether the referenced artifact exists within the underlying theory context; (2) whether the type of the
referenced artifact matches the antiquotation's type.

A major foundation for our work is
Isabelle/\idof~\cite{DBLP:conf/mkm/BruckerACW18,Brucker2019-DOFDesign,Brucker2019-DOFCert}, an ontology framework for
Isabelle. \idof permits the description of ontologies using \ac{iosl}, a language to model \emph{document classes},
which extends the document model with new structures. We use the command \textcolor{Blue}{\inlineisar+doc_class+} from
\ac{iosl} to add new document classes for each of the SACM classes. Instances of DOF classes are not embedded into the
HOL logic as datatypes, but sit at the meta-logical level in the document model. This means they can refer to other
objects like theorems and definitions, can themselves be referenced using antiquotations, and carry an enriched version
of the corresponding Isabelle markup string. One of DOF's targets is formal development of certification documents,
making use of Isabelle's proof facilities~\cite{Brucker2019-DOFCert}. In this work, we take their vision forward with
our SACM-based assurance case framework.

\subsection{Isabelle/UTP}
\label{sec:iutp}

\iutp~\cite{Foster2020-IsabelleUTP} is a tool for developing formal semantics and verification tools based on Hoare and
He's Unifying Theories of Programming~\cite{Hoare&98}. \iutp contains a number of theories for reasoning about programs
built using different computational paradigms, such as concurrent and real-time programming. In this paper, we use the
core relational programming model to verify the functional specification of Tokeneer.

Variable mutation in \iutp is modelled algebraically using lenses~\cite{Foster07}. A variable of type $V$ in a state
space $S$ is denoted by a lens $x : V \lto S$, with two functions $\lget : S \to V$ and $\lput : S \to V \to S$, that
respectively query and update the value of the variable in a given state $s : S$. This allows us to treat variables as
semantic objects, rather than syntactic objects. We can check whether two lenses, $x$ and $y$, refer to disjoint regions
of the state space using independence $x \lindep y$. We can also check whether an expression $e$ depends on a particular
variable $x$ using unrestriction, which is written $x \unrest e$, and is a semantic encoding of variable freshness. For
example, if $x \lindep y$ (they are different variables), then $x \unrest (y + 1)$, since the valuation of $y + 1$ does
not depend on the value of $x$. Using these predicates, and the UTP relational program model~\cite{Hoare&98}, we can
express laws about assignments, such as commutativity:
$$(x := e \relsemi y := f) = (y := f \relsemi x := e), \quad x \lindep y, y \unrest e, x \unrest f$$
As we have previously shown~\cite{Foster2020-IsabelleUTP}, lenses effectively allow us to semantically characterise
variable sets as well (for example $a = \{x, y, z\}$) and thus framing properties. As a dual to unrestriction, we also
have the used-by predicate, $a \usedby e$, which states that $e$ uses only those variables mentioned in $a$.

Rather than using the Z notation~\cite{Spivey89}, we use a variant of Dijkstra's \ac{gcl}~\cite{Dijkstra75} encoded in
\iutp to specify the model's behaviour.  Our \ac{gcl} has the following syntax:
$$\prog ::= \ckey{skip} | \ckey{abort} | \prog \relsemi \prog | \expr \longrightarrow \prog | \prog \intchoice \prog | \var := \expr | \uframe{\var}{\prog} $$
Here, $\prog$ is a program, $\expr$ is an expression, and $\var$ is a variable. The language provides sequential
composition, guarded commands, non-deterministic choice\footnote{Technically, this is the same as $\lor$ in \iutp, the
  same as in Z.}, and assignment. We adopt a frame operator $\uframe{a}{P}$, which states that $P$ changes only
variables in the namespace $a$~\cite{Foster16a,Foster19a-IsabelleUTP}. The namespace is modelled by a lens
$a : S_1 \lto S_2$, which shows how to embed the inner state-space $S_1$ into the outer state-space $S_2$. This enables
modular reasoning about the \ac{tis} internal and real-world states, which is a further novelty of our work. We give both a
weakest precondition ($\ckey{wp}$) and weakest liberal precondition ($\ckey{wlp}$) semantics to our GCL. Technically,
each operator is denoted as a relational predicate in UTP, and the following laws are theorems of these
definitions~\cite{Hoare&98,Cavalcanti&06}.

\begin{theorem}[UTP Weakest Preconditions] $ $ \label{thm:utp-wp} \isalink{https://github.com/isabelle-utp/utp-main/blob/master/utp/utp_wp.thy} %

\vspace{-2ex}
  
\begin{center}
\begin{minipage}{.4\linewidth}
\centering\begin{align*}
  \ckey{skip} \uwp b &= b \\
  \ckey{abort} \uwp b &= false \\
  (P \relsemi Q) \uwp b &= P \uwp (Q \uwp b) \\
  (e \longrightarrow P) \uwp b &= (e \land P \uwp b)
\end{align*}
\end{minipage}\begin{minipage}{.4\linewidth}
\centering\begin{align*}
  (P \intchoice Q) \uwp b &= P \uwp b \lor Q \uwp b \\
  x := e \uwp b &= b[e/x] \\
  \uframe{a}{P} \uwp b &= (b \land (P \uwp true)_{\uparrow a}) & a \unrest b \\
  \uframe{a}{P} \uwp b &= (P \uwp b_{\downarrow a})_{\uparrow a} & a \usedby b
\end{align*}
\end{minipage}
\end{center}
\end{theorem}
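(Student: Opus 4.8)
The proof is essentially a sequence of equational verifications, one per law, carried out by unfolding the UTP relational-predicate definitions of $\ckey{wp}$ and of each program operator, and then discharging the resulting pointwise predicate-calculus obligations — almost all of which Isabelle's automated tactics (e.g.\ \inlineisar+pred_auto+ or \inlineisar+rel_auto+ in \iutp) should close directly. So the plan is to state each equation as a separate lemma in the theory \texttt{utp\_wp.thy}, expand $P \uwp b$ to its definition as a relational predicate (the set of initial states from which $P$ is guaranteed to establish $b$), and let the relational-calculus tactic finish. First I would handle $\ckey{skip}$ and $\ckey{abort}$, which are immediate since $\ckey{skip}$ is the identity relation and $\ckey{abort}$ the universal (or bottom, depending on convention) relation. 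Then sequential composition: $(P \relsemi Q) \uwp b = P \uwp (Q \uwp b)$ follows from the associativity of relational composition and the characterisation of $\uwp$ via composition with the test/co-test of $b$; this is the one genuinely structural law and should reduce to a standard UTP lemma about $\relsemi$. The guard law $(e \longrightarrow P)\uwp b = e \land P \uwp b$ and the nondeterministic-choice law $(P \intchoice Q)\uwp b = P\uwp b \lor Q\uwp b$ both fall out of distributing $\uwp$ over the lattice operations (conjunction/disjunction of relations), again routine for \inlineisar+rel_auto+. The assignment law $x := e \uwp b = b[e/x]$ is the substitution lemma for lenses, already available in \iutp from the lens-based treatment of state (using $\lput$ and the substitution machinery).

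The real work — and the expected main obstacle — is the pair of frame laws for $\uframe{a}{P}$. Here $a : S_1 \lto S_2$ is a lens embedding the inner state into the outer state, and the two laws are guarded by side conditions phrased with unrestriction ($a \unrest b$) and the used-by predicate ($a \usedby b$). The plan for $\uframe{a}{P}\uwp b = b \land (P \uwp true)_{\uparrow a}$ when $a \unrest b$ is: unfold the frame operator, which runs $P$ on the $S_1$-portion of the state while leaving the complement fixed; observe that the postcondition $b$, being unrestricted by $a$, is unaffected by what $P$ does, so $b$ must hold initially; and the remaining obligation is that $P$ does not abort on the projected state, i.e.\ $(P \uwp true)$ lifted along $a$ via the extension operator $(-)_{\uparrow a}$. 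The lens laws $\lget \circ \lput = \mathrm{id}$ and the independence of $a$ from its complement will be the workhorses. For the dual law $\uframe{a}{P}\uwp b = (P \uwp b_{\downarrow a})_{\uparrow a}$ under $a \usedby b$, the key observation is that when $b$ depends only on variables inside the frame $a$, the postcondition can be pushed through the frame: restrict $b$ to the inner state-space via $(-)_{\downarrow a}$, compute the weakest precondition of $P$ there, and lift back. The obstacle is getting the interaction between lens projection/extension and the definition of $\uwp$ exactly right, and making sure the side conditions are strong enough for the relevant lens lemmas ($a \unrest b$ versus $a \usedby b$) to apply; I would expect to need a few auxiliary lemmas relating $(-)_{\uparrow a}$, $(-)_{\downarrow a}$, substitution, and $\uwp$ before \inlineisar+rel_auto+ can close these two goals. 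Everything else I expect to be a one-line \inlineisar+by (rel_auto)+ or \inlineisar+by (pred_auto)+ once the definitions are in scope.
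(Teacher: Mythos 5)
Your plan matches the paper's treatment: the paper gives no written proof beyond noting that each operator is denoted as a relational predicate in UTP and that the laws are theorems of those definitions (discharged in the Isabelle/UTP mechanisation by unfolding and relational-calculus automation), which is exactly your strategy. You also correctly single out the frame laws as the only non-standard cases needing extra lens/coercion lemmas, in agreement with the paper's own remark that the other laws are standard.
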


\noindent With these equations, we can calculate the weakest precondition of any program composed of these operators. The
$\ckey{wlp}$ semantics is almost the same for each operator, except for the following equations:
$$\ckey{abort} \uwlp b = true \qquad (e \longrightarrow P) \uwlp b = (e \implies P \uwp b) \qquad (P \intchoice Q) \uwlp b = P \uwp b \land Q \uwp b$$ Most of the $\uwp$ and $\uwlp$ laws are standard~\cite{Dijkstra75}, the exception
being the laws for the frame operator. These make use of state space coercions~\cite{Foster20-LocalVars},
$P_{\uparrow a}$ and $P_{\downarrow a}$, which respectively grow and
shrink
the state space of $P$ using $a$. This, for example, means that the types of variables and quantifiers in $P$ are type
coerced. The first frame law has the proviso $a \unrest b$, meaning that the postcondition $b$ does not depend on any
variables in the frame $a$. Consequently, the weakest precondition is essentially $b$, but we also need to conjoin the
domain of $P$. Since $P$ operates on the inner state space, we need to grow its state space using coercions. The second
frame law, conversely, has that $b$ depends only on the variables in $a$. Consequently, the weakest precondition is
derived directly from $P$, but with suitable state space coercions applied.

We can use $\ckey{wlp}$ calculus to verify Hoare triple, using the following well-known
theorem~\cite{Foster2020-IsabelleUTP}:

\begin{theorem} $\hoaretriple{p}{Q}{r} \iff (p \implies Q \uwlp r)$ \isalink{https://github.com/isabelle-utp/utp-main/blob/105fbfee39c68177c730abb2ec47def5920d1e8b/utp/utp_wlp.thy\#L80}

\end{theorem}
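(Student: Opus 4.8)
The plan is to collapse both sides of the biconditional to the same quantified relational statement about the before- and after-states of $Q$, and then discharge the equivalence by elementary predicate calculus. In Isabelle/UTP this amounts to unfolding the two definitions and invoking relational automation (\texttt{rel\_auto}), but the underlying reasoning is the following.

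First I would recall the semantic definition of the Hoare triple in the UTP relational model: $\hoaretriple{p}{Q}{r}$ holds iff for every before-state $s$ and after-state $s'$, whenever $p$ holds in $s$ and $(s,s')$ satisfies $Q$, then $r$ holds in $s'$. Formally this is the universally closed implication $[\,Q \implies (\lceil p \rceil \implies \lceil r \rceil')\,]$, where $\lceil \cdot \rceil$ lifts a state predicate to a relational predicate on the before-state and $\lceil \cdot \rceil'$ on the after-state; equivalently it is the refinement $(\lceil p \rceil \implies \lceil r \rceil') \sqsubseteq Q$.

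Second I would unfold the weakest liberal precondition. In the relational model $Q \uwlp r = \lnot\,(Q \relsemi \lceil \lnot r \rceil)$, which read as a predicate on the before-state $s$ says exactly: $\forall s'.\ (s,s') \models Q \implies r$ holds in $s'$. Hence the right-hand side, $p \implies Q \uwlp r$, unfolds to $\forall s.\ p(s) \implies (\forall s'.\ (s,s') \models Q \implies r(s'))$. By currying/uncurrying and reordering of the two universal quantifiers this is equivalent to $\forall s,s'.\ p(s) \land (s,s') \models Q \implies r(s')$, which is precisely the unfolded meaning of $\hoaretriple{p}{Q}{r}$ from the first step, so the two sides coincide.

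The only real subtlety — and the step I would be most careful about — is the bookkeeping of alphabets: ensuring that $\lceil \cdot \rceil$ and $\lceil \cdot \rceil'$ are applied consistently so that $p$ and $r$ are genuine conditions (no dashed variables in $p$, no hidden constraints linking the two state copies), and that $\relsemi$ composes over the correct shared intermediate state. Once these are pinned down there is no remaining mathematical content beyond quantifier shuffling, which Isabelle's relational calculus tactics dispatch immediately; the theorem is "well-known" precisely because, modulo these definitional unfoldings, it is pure first-order logic.
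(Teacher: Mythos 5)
Your proof is correct and matches the paper's treatment: the paper gives no textual proof for this theorem, deferring to the mechanisation, which discharges it exactly as you describe — by unfolding the relational definitions of the Hoare triple and of $\ckey{wlp}$ as $\lnot(Q \relsemi \lceil\lnot r\rceil)$ and closing the quantifier shuffle with \texttt{rel\_auto}. Your attention to the alphabet/lifting bookkeeping is the right place to be careful, and nothing further is needed.
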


\noindent In \iutp we have developed a tactic, \inlineisar+hoare_wlp_auto+, that utilises this theorem, calculates the
precondition using Theorem~\ref{thm:utp-wp}, and uses the UTP tactic \inlineisar+rel_auto+ relational calculus
tactic~\cite{Foster2020-IsabelleUTP} to try and discharge the resulting verification condition.

\section{Case Study: Tokeneer}
\label{sec:tokeneer}
\begin{figure}
  \subcaptionbox{Overview of the physical infrastructure
    \label{fig:tis:arch}}{
    \includegraphics[width=7cm]{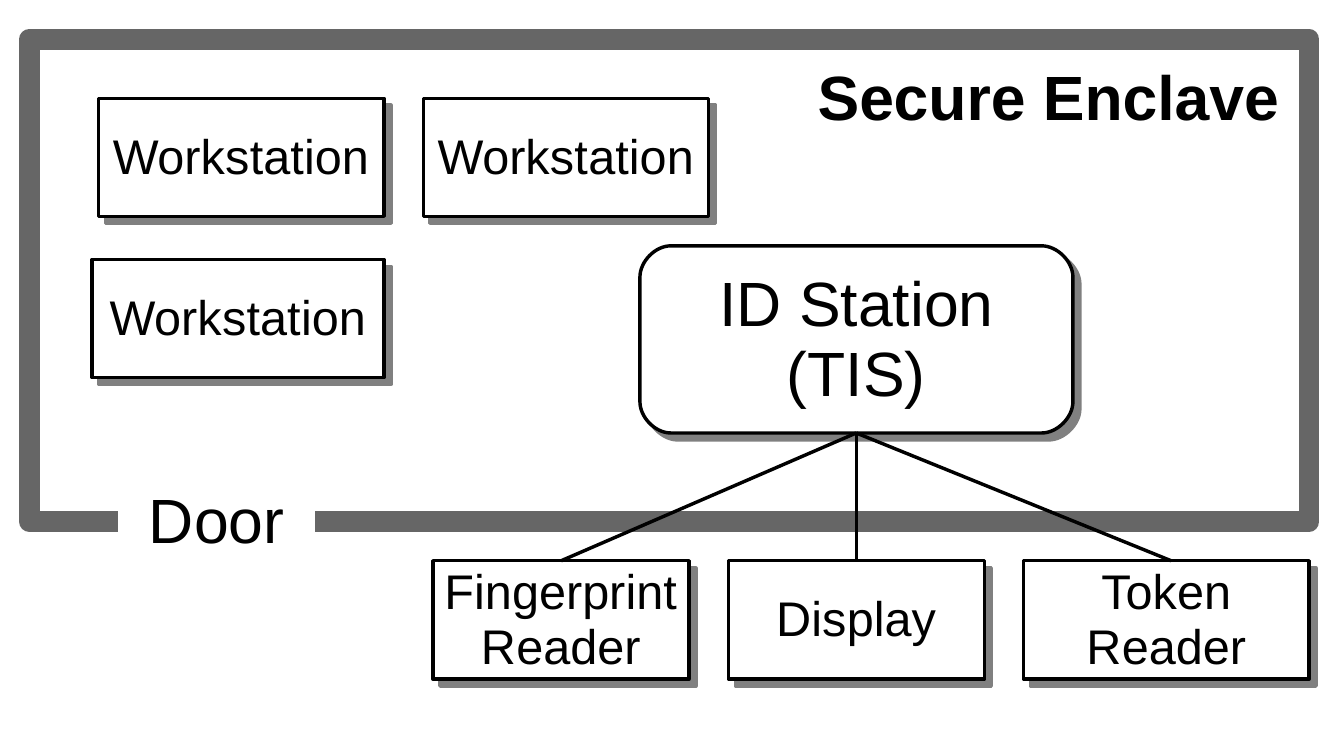}}
  \hfill
  \subcaptionbox{Fragment of the original assurance procedure
    \label{fig:tis:proc}}{\footnotesize\begin{tikzpicture}[dia,scale=.85,every node/.style={transform shape}]
  \node[itm] (prop) at (0,0) {Security\\Properties\\$\quad$};
  \node[lang] at (prop.south east) {natural language};
  \node[act] (proof1) at ($(prop)+(0,-2.5)$) {Proof of\\Security\\Properties};
  \node[itm] (fspec) at ($(proof1)+(3.5,2.5)$) {Formal\\Specification\\$\quad$};
  \node[lang] at (fspec.south east) {Z};
  \node[act] (proof2) at ($(fspec)+(3.5,0)$) {Proof of\\Formal\\Specification};
  \node[itmooc] (fdsgn) at ($(fspec)+(0,-2.5)$) {Formal\\Design\\$\quad$};
  \node[lang] at (fdsgn.south east) {Z};
  \node[actooc] (proof3) at ($(fdsgn)+(3.5,0)$) {Refinement\\Proof of Formal\\Design};
  
  \draw[arrows={-latex},thick,tips=proper] 
  (prop) edge node[align=center,left] {translated\\into Z for} (proof1)
  (fspec) edge[bend right=10] node[lab,above] {used in} (proof1)
  (proof1) edge[bend right=10] node[align=center,sloped] {satisfaction\\informally argued} (fspec)
  (fspec) edge[bend left] node[lab,below] {used in} (proof2)
  (proof2) edge node[lab,below] {verifies\\well-\\formedness} (fspec)
  (fspec) edge[bend right=10] node[lab,below] {used in} (proof3)
  (fdsgn) edge node[lab,below] {used\\in} (proof3)
  (proof3) edge[bend left] node[lab,below] {verifies refinement} (fdsgn)
  ;
  \draw[black!50,-latex,thick,decoration={zigzag,post=lineto,post
    length=.25cm},rounded corners=1pt]
  decorate {(fspec) -> (fdsgn)};
  \draw[black!50,-latex,thick,decoration={zigzag,post=lineto,post
    length=.25cm},rounded corners=1pt]
  decorate {(fdsgn) -> ($(fdsgn)+(0,-1.5)$)};
  \draw[black!50,decoration={zigzag},rounded corners=2pt] decorate
  {(-1.5,-4) -- (8,-4)};
  \node[align=center] at (0,-4.5) {System Test};
  \node[align=center] at (5,-4.5) {Informed Design, SPARK Implementation};
\end{tikzpicture}}
  \caption{The Tokeneer ID Station architecture and assurance
    procedure~(adapted from \cite{TIS-SummaryRep})
    \label{fig:tis}}
\end{figure}

To demonstrate our approach, we use the \acf{tis}\footnote{Project
  website: \url{https://www.adacore.com/tokeneer}} illustrated in
Figure~\ref{fig:tis}, a system that guards entry to a secure
enclave. The pioneering work on the \ac{tis} assurance was carried out
by Praxis High Integrity Systems and
SPRE~Inc.~\cite{Barnes2006-EngineeringTokeneerenclave}. Barnes et
al.~performed security analysis, definition of a security target,
formal functional specification using Z, refinement to a formal
design, implementation in SPARK, and verification of the security
properties against the Z specification~(Figure~\ref{fig:tis:proc}).

After independent assessment, \ac{cc} \ac{eal} 5 was achieved.
Therefore, Tokeneer can be seen as a successful example of using
\acp{fm} to assure a system against \ac{cc}. Though now more than
fifteen years old, it remains an important benchmark for \acp{fm} and
other assurance techniques.

As indicated in Figure~\ref{fig:tis:arch}, the physical infrastructure
consists of a door, fingerprint reader, display, and card (token)
reader. The main function is to check the credentials on a presented
token, read a fingerprint if necessary, and then either unlatch the
door, or deny entry. Entry is permitted when the token holds at least
three data items: (1) a user identity (ID) certificate, (2) a
privilege certificate, with a clearance level, and (3) an
identification and authentication (I\&A) certificate, which assigns a
fingerprint template. When the user first presents their token, the
three certificates are read and cross-checked. If the token is valid,
then a fingerprint is taken, which, if validated against the I\&A
certificate, allows the door to be unlocked once the token is
removed. An optional authorisation certificate is written upon
successful authentication, which allows the fingerprint check to be
skipped.

The \ac{tis} has a variety of other functions related to its
administration. Before use, a \ac{tis} must be enrolled, meaning it is
loaded with a public key chain and certificate, which are needed to
check token certificates. Moreover, the \ac{tis} stores audit data which
can be used to check previously occurred entries. The \ac{tis} therefore
also has a keyboard, floppy drive, and screen to configure
it. Administrators are granted access to these functions. The \ac{tis} also
has an alarm which will sound if the door is left open for too long.

The security of the \ac{tis} is assured by demonstrating six
\acp{sfr}~\cite{TIS-SecurityProperties}:
\label{props:TIS}
\begin{description}
\item[SFR1] \label{prop:SFR1} If the latch is unlocked, then \ac{tis} must
  possess either a User token or an Admin token. The User token must
  either have a valid authorisation certificate, or valid ID,
  Privilege, and I\&A Certificates, together with a template that
  allowed to successfully validate the User's fingerprint. Or, if the
  User token does not meet this, the Admin token must have a valid
  authorisation certificate, with the role ``guard''.
\item[SFR2] \label{prop:SFR2} If the latch is unlocked automatically by \ac{tis}, then the
  current time must be close to being within the allowed entry period
  defined for the User requesting access.
\item[SFR3] \label{prop:SFR3} An alarm will be raised whenever the door/latch is
  insecure.
\item[SFR4] \label{prop:SFR4} No audit data is lost without an audit alarm being raised.
\item[SFR5] \label{prop:SFR5} The presence of an audit record of one type will always be
  preceded by certain other audit records.
\item[SFR6] \label{prop:SFR6} The configuration data will be changed, or information
  written to the floppy, only if there is an Admin person logged on to
  the \ac{tis}.
\end{description}
\noindent Our objective is to (i) construct a machine-checked
assurance case that argues that the \ac{tis} fulfils the security
properties SFR1, part of SFR2, SFR3, and SFR6, and (ii) integrate
evidential artifacts from the mechanised model of the \ac{tis}
behaviour in \iutp into this assurance case.
For these SFRs, our approach re-enacts the green parts in
Figure~\ref{fig:tis:proc}.  Particularly, we focus on verifying the
functional formal specification against the security properties and on
checking well-formedness of the functional specification.

We envisage the modular assurance
case~\cite{Kelly1998,Denney2015-TowardsFormalBasis} for Tokeneer
illustrated in Figure~\ref{fig:tis-modular}. Here, we have modelled
the main documents produced during the development process as
assurance case modules using modular \ac{gsn}. The numbers correspond to the
document codes given in the Tokeneer archive\footnote{Tokeneer
  materials: \url{https://www.adacore.com/tokeneer/download}}. Each of
the package symbols represents a collection of claims, arguments, and
other lifecycle artifacts, for example
\textsf{40\_4\_Security\_Properties} provides formalisation of some of
the six SFRs. Certain artifacts are marked public, meaning they can be
used by other modules, and some are private. The arrows between the
modules indicate dependencies, for example the formal specification is
developed both in the context of the system requirements and the
security properties. In this paper, we focus on formalisation of
\textsf{41\_2\_Functional\_Specification}, and the argument that the
SFRs are satisfied in \textsf{TIS\_SFRs}. The assurance arguments and
artifacts will be embedded into Isabelle/SACM, which we develop in the
next section.

\begin{figure}
  \begin{center}
    \includegraphics[width=\linewidth]{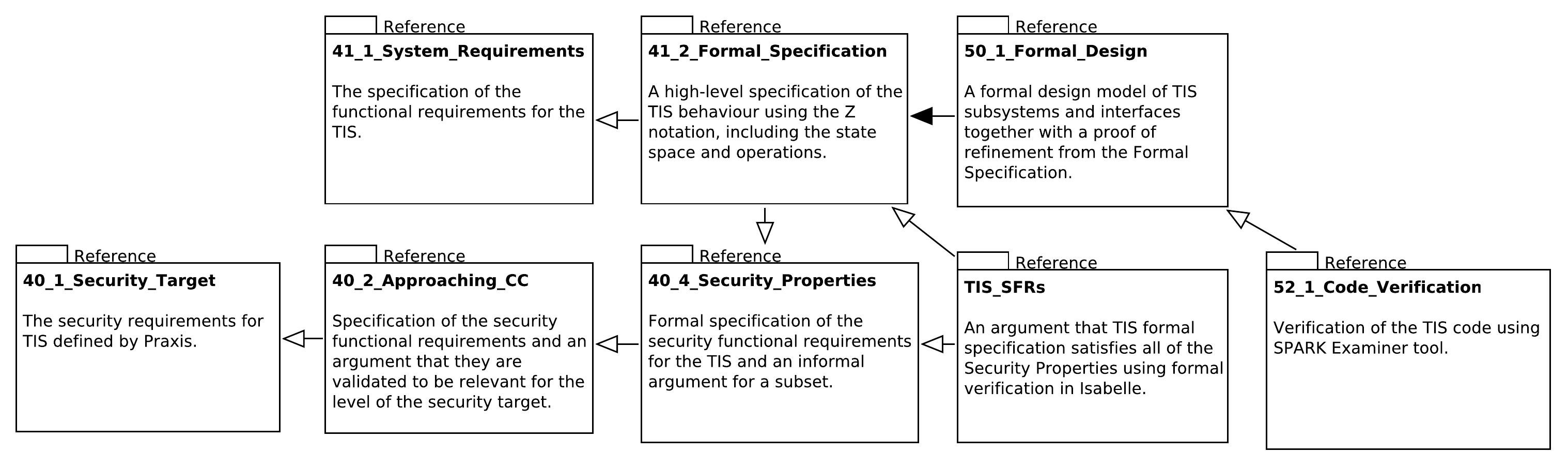}
  \end{center}
  \caption{TIS Modular Assurance Case Fragment}
  \label{fig:tis-modular}
\end{figure}

\section{\isacm}
\label{sec:isacm}
In this section we encode SACM as a DOF ontology (\S\ref{subsec:sacm-model}), and use it to provide an interactive
machine-checked AC language (\S\ref{subsec:sacm-language}). Our embedding implements ACs as meta-logical entities in
Isabelle, that is, elements of the document model, rather than as formal elements embedded in the HOL logic, as this
would prevent the expression of informal reasoning and explanation. Therefore, antiquotations to formal artifacts can be
freely mixed with natural language and other informal artifacts.

\subsection{Modelling: Embedding SACM in Isabelle}
\label{subsec:sacm-model}
We embed the SACM meta-model in Isabelle using \ac{iosl}, and we focus
on modelling \inlineisar+ArgumentAsset+\footnote{We model all parts of
  SACM in \idof, but omit details for sake of brevity.} and its child
classes from Figure~\ref{fig:sacm-arg}, as these are the most relevant
classes for the TIS assurance argument that we develop in
\S\ref{sec:tokassure}.  The class \inlineisar+ArgumentAsset+ has the
following textual model:
\begin{isar}[numbers=none, backgroundcolor=\color{black!10}, frame=lines]
(*@\textcolor{Blue}{doc\_class}@*) ArgumentAsset = ArgumentationElement +
  content_assoc:: MultiLangString 
\end{isar}
Here, \textcolor{Blue}{\inlineisar+doc_class+} defines a new class, and automatically generates an antiquotation type,
\inlineisar+@{ArgumentAsset \<open>...\<close>}+, which can be used to refer to entities of this
type. \inlineisar+ArgumentationElement+ is a class which \inlineisar+ArgumentAsset+ inherits from, but is not discussed
further. \inlineisar+content_assoc+ models the content association in Figure~\ref{fig:sacm-arg}. To model
\inlineisar+MultiLang+\inlineisar+String+ in \isacm, we use \idof's markup string. Thus, the usage of antiquotations is
allowed for artifacts with the type \inlineisar+MultiLang+\inlineisar+String+.

\inlineisar+ArgumentAsset+ has three subclasses: (1) \inlineisar+Assertion+, which is a unified type for claims and
their relationships; (2) \inlineisar+ArgumentReasoning+, which is used to explicate the argumentation strategy being
employed; and (3) \inlineisar+ArtifactReference+, that evidences a claim with an artifact. Since \idof extends the \ihol
document model, we can use the latter's types, such as sets and enumerations (algebraic datatypes), in modelling SACM
classes, as shown below:
\begin{isar}[numbers=none, backgroundcolor=\color{black!10}, frame=lines]
(*@\textcolor{Blue}{datatype}@*) assertionDeclarations_t = 
  Asserted|Axiomatic|Defeated|Assumed|NeedsSupport|AsCited

(*@\textcolor{Blue}{doc\_class}@*) Assertion = ArgumentAsset +  
  assertionDeclaration::assertionDeclarations_t          

(*@\textcolor{Blue}{doc\_class}@*) Claim = Assertion +
  metaClaim::"Assertion set" <= "{}"

(*@\textcolor{Blue}{doc\_class}@*) ArgumentReasoning  = ArgumentAsset + 
  structure_assoc::"ArgumentPackage option"

(*@\textcolor{Blue}{doc\_class}@*) ArtifactReference = ArgumentAsset +
  referencedArtifactElement_assoc::"ArtifactElement set"
\end{isar}
Here, \textcolor{Blue}{\inlineisar+datatype+} defines an algebraic datatype, \inlineisar+assertionDeclarations_t+ is an
enumeration, \inlineisar+set+ is the set type, and \inlineisar+option+ is the optional type. Attribute
\inlineisar+assertionDeclaration+ is of type \inlineisar+assertionDeclarations_t+, which specifies the status of
instances of type \inlineisar+Assertion+. Examples of \inlineisar+Assertion+s in SACM are claims, justifications, and
both kinds of arrows in Figure~\ref{fig:gsn}. A \inlineisar+Claim+ is an assertion, extended with the
\inlineisar+metaClaim+ association. The attribute \inlineisar+structure_assoc+, in class \inlineisar+ArgumentReasoning+,
is an association to the class \inlineisar+ArgumentPackage+, which is not discussed here.  Finally, the attribute
\inlineisar+referencedArtifactElement_assoc+, from class \inlineisar+ArtifactReference+, is an association to
\inlineisar+ArtifactElement+s from the \inlineisar+ArtifactPackage+, allowing instances of type
\inlineisar+ArgumentAsset+ to be supported by evidential artifacts.

The class \inlineisar+Claim+ in Figure~\ref{fig:sacm-arg} inherits from the class \inlineisar+Assertion+ the attributes
\inlineisar+gid+, \inlineisar+content_assoc+, and \inlineisar+assertionDeclaration+ of type
\inlineisar+assertionDeclarations_t+. The other child class of \inlineisar+Assertion+ is
\inlineisar+AssertedRelationship+, as shown below.
\begin{isar}[numbers=none, backgroundcolor=\color{black!10}, frame=lines]
(*@\textcolor{Blue}{doc\_class}@*) AssertedRelationship = Assertion +
  isCounter::bool
  reasoning_assoc:: "ArgumentReasoning option"

(*@\textcolor{Blue}{doc\_class}@*) AssertedInference = AssertedRelationship +
  isCounter::bool <= False
  source::"Assertion set"
  target::"Assertion set"  
  
(*@\textcolor{Blue}{doc\_class}@*) AssertedEvidence = AssertedRelationship +
  isCounter::bool <= False
  source::"ArtifactAsset set"
  target::"Assertion set"  
\end{isar}
\inlineisar+AssertedRelationship+ models the relationships between instances of type \inlineisar+ArgumentAsset+, such as
the ``supported by'' and ``in context of'' arrows of Figure~\ref{fig:gsn}. \inlineisar+isCounter+ specifies whether the
target of the relation is supported or refuted by the source, and \inlineisar+reasoning_assoc+ is an association to
\inlineisar+ArgumentReasoning+, which models GSN strategies in \ac{sacm}. The attributes \inlineisar+source+ and
\inlineisar+target+, both of type \inlineisar+ArgumentAsset+, specify the source and target for the relation. Rather
than placing them directly in \inlineisar+AssertedRelationship+ we put them in the concrete subclasses, as this means
they can be specialised to enforce OCL constraints in the reference meta-model~\cite{sacm}. The various kinds of
relationship classes, such as \inlineisar+AssertedInference+ and \inlineisar+AssertedEvidence+, are then created as
subclasses. An \inlineisar+AssertedInference+ can only connect assertions, and an \inlineisar+AssertedEvidence+ can only
connect an evidential artifact to an assertion. These constraints are enforced by DOF when model instances are created.

\subsection{Interactive Assurance Language}
\label{subsec:sacm-language}
\acf{ial} is our assurance language with a concrete syntax consisting of various Isabelle commands that extend the
document model in Figure~\ref{fig:doc-model}. Each command performs a number of checks: (1) standard Isabelle checks
(\S\ref{sec:prelim}); (2) OCL-style constraints imposed on the attributes by SACM (provided by \idof); (3)
well-formedness checks against the meta-model, e.g. instances comply to the type restrictions imposed by the SACM
datatypes.

\begin{figure}
  \begin{center}
    \includegraphics[align=c,width=.45\linewidth]{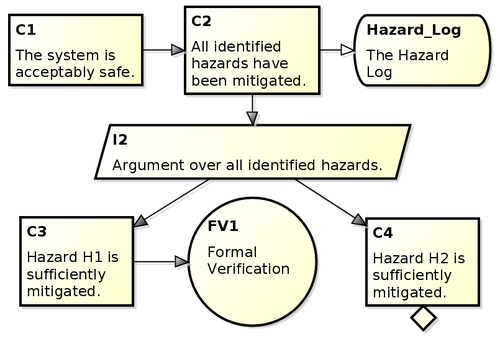}\qquad\includegraphics[align=c,width=.45\linewidth]{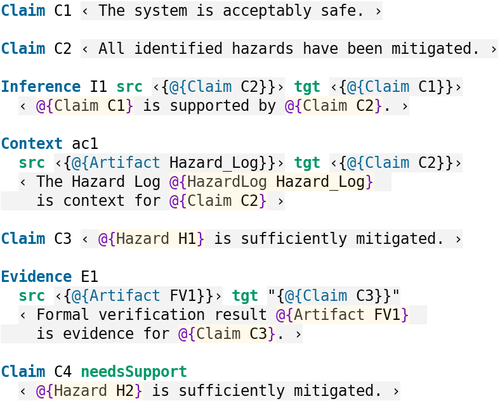}
  \end{center}

  \caption{Translation of GSN to IAL}
  \label{fig:GSN-IAL}
\end{figure}

IAL instantiates \textcolor{Blue}{\inlineisar+doc_class+}es from \S\ref{subsec:sacm-model} to create SACM model elements
in Isabelle, for example, the command \textcolor{Blue}{\inlineisar+Claim+} creates a model element of the class
\inlineisar+Claim+. Attributes and associations of a class have a concrete syntax represented by an Isabelle
(\textcolor{OliveGreen}{green}) subcommand. The grammar of the IAL commands for creating argumentation elements is shown
below.

\begin{isar}[numbers=none, backgroundcolor=\color{black!10}, frame=lines]
<AssertDecl>    := (*@\textcolor{OliveGreen}{asserted}@*) | (*@\textcolor{OliveGreen}{axiomatic}@*) | (*@\textcolor{OliveGreen}{assumed}@*) | (*@\textcolor{OliveGreen}{defeated}@*) | (*@\textcolor{OliveGreen}{needsSupport}@*)
<ClaimComm>     := (*@\textcolor{Blue}{Claim}@*) <gid> (*@\textcolor{OliveGreen}{isAbstract}@*)? (*@\textcolor{OliveGreen}{isCitation}@*)? ((*@\textcolor{OliveGreen}{metaClaims}@*) <gid>*)? <AssertDecl>? <Description>
<InferenceComm> := (*@\textcolor{Blue}{Inference}@*) <gid> <AssertDecl> ((*@\textcolor{OliveGreen}{src}@*) <gid>*) ((*@\textcolor{OliveGreen}{tgt}@*) <gid>*) <Description>
<ContextComm>   := (*@\textcolor{Blue}{Context}@*) <gid> <AssertDecl> ((*@\textcolor{OliveGreen}{src}@*) <gid>*) ((*@\textcolor{OliveGreen}{tgt}@*) <gid>*) <Description>
<EvidenceComm>  := (*@\textcolor{Blue}{Evidence}@*) <gid> <AssertDecl> ((*@\textcolor{OliveGreen}{src}@*) <gid>*) ((*@\textcolor{OliveGreen}{tgt}@*) <gid>*) <Description>
\end{isar}

\noindent\textcolor{Blue}{\inlineisar+Claim+} creates a model element of type \inlineisar+Claim+ 
with an identifier (\inlineisar+gid+), and description contained in a \inlineisar+MultiLangString+. The antiquotation
\inlineisar+@{Claim \<open><gid>\<close>}+ can be used to reference the created model element. The subcommands
\textcolor{OliveGreen}{\inlineisar+isAbstract+}, \textcolor{OliveGreen}{\inlineisar+isCitation+},
\textcolor{OliveGreen}{\inlineisar+metaClaims+}, and \textcolor{OliveGreen}{\inlineisar+<AssertDecl>+} are optional,
with default values being \inlineisar+False+, \inlineisar+False+, \inlineisar+{}+ and \inlineisar+asserted+,
respectively. The \textcolor{OliveGreen}{\inlineisar+metaClaims+} keyword allows us to link a claim to assertions about
this claim, such as the level of confidence in it. \textcolor{Blue}{\inlineisar+Inference+} creates an inference between
several model elements of type \inlineisar+ArgumentAsset+. It has subcommands \textcolor{OliveGreen}{\inlineisar+src+}
and \textcolor{OliveGreen}{\inlineisar+tgt+} that are both lists of antiquotations pointing to
\inlineisar+ArgumentAsset+s. The use of antiquotations to reference the instances ensures that Isabelle will do the
checks explained in~\S\ref{sec:prelim}. \textcolor{Blue}{\inlineisar+Context+} similarly asserts that an instance should
be treated as context for another, and \textcolor{Blue}{\inlineisar+Evidence+} associates evidence with a claim. Model
elements created by IAL are \emph{semi-formal}, since they can contain both informal content and references to machine
checked formal content.

With these commands, IAL can be used to represent a GSN diagram, as illustrated in Figure~\ref{fig:GSN-IAL}. The claims
\inlineisar+C1+--\inlineisar+C4+ are encoded using the \textcolor{Blue}{\inlineisar+Claim+} command. Claim
\inlineisar+C1+ is supported by \inlineisar+C2+ via the inference \inlineisar+I1+, which represents the ``supported-by''
arrow in the GSN at the left, and uses antiquotations to refer to the two claims. An artifact called
\inlineisar+Hazard_Log+ is introduced as context for \inlineisar+C2+ using the \textcolor{Blue}{\inlineisar+Context+}
command. A further evidence artifact \inlineisar+FV1+ is used to support claim \inlineisar+C3+, using the
\textcolor{Blue}{\inlineisar+Evidence+} command. The final claim \inlineisar+C4+ is left undeveloped, indicated by the
\textcolor{OliveGreen}{\inlineisar+needsSupport+} keyword.

\begin{figure}

  \begin{subfigure}{.5\linewidth}
    \centering
    \framebox{\includegraphics[width=.8\linewidth]{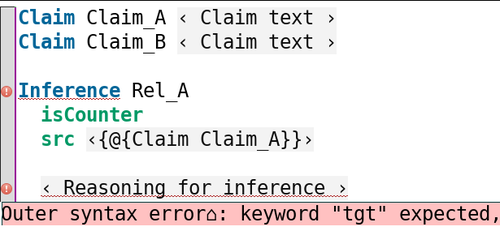}}

    \caption{Well-formedness}
    \label{sfig:well-formed}
  \end{subfigure}\begin{subfigure}{.5\linewidth}
    \centering
    \framebox{\includegraphics[width=.8\linewidth]{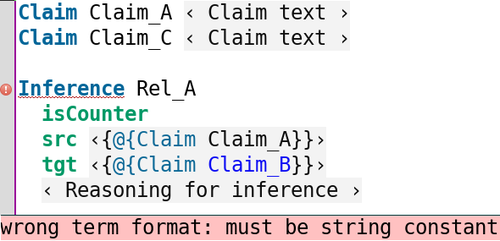}}

    \caption{Missing Elements}
    \label{sfig:missing}
  \end{subfigure}
  \begin{subfigure}{.5\linewidth}
    \centering
    \framebox{\includegraphics[width=.8\linewidth]{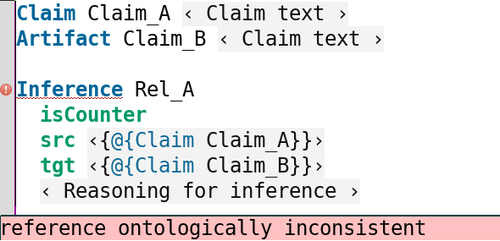}}

    \caption{Element Typing}
    \label{sfig:typing}
  \end{subfigure}\begin{subfigure}{.5\linewidth}
    \centering
    \framebox{\includegraphics[width=.8\linewidth]{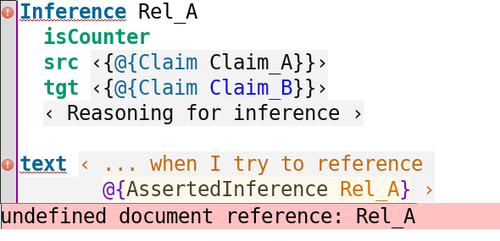}}

    \caption{Cascading Errors}
    \label{sfig:cascading}
  \end{subfigure}

  \caption{Interactive Nature of IAL Error Handling}
  
  \label{fig:ClaimA_DSL}

\end{figure}
Figure~\ref{fig:ClaimA_DSL} shows the interactive nature of IAL, and some of the error types. In
\eqref{sfig:well-formed}, the \textcolor{Blue}{\inlineisar+Inference+} command expects a source followed by target
element, but the latter is missing, and so IAL raises the error message at the bottom. The exclamation marks to the left
denote from where the error originates. In the jEdit interface, these errors are raised interactively whilst the user is
typing. Moreover, this kind of check ensures that the model elements produced conform to the SACM reference
meta-model.

In \eqref{sfig:missing}, the target is specified (\inlineisar+Claim_A+), but it refers to a claim that does not exist
(hence the blue colour), and so IAL again raises an error message. In \eqref{sfig:typing}, an element called
\inlineisar+Claim_A+ exists, but it is of the wrong type. \inlineisar+Claim_A+ is an artifact, which violates the OCL
constaints of the SACM standard~\cite{sacm}, and so DOF raises an ontological error. Finally, \eqref{sfig:cascading}
shows the cascading effect of errors: \inlineisar+Claim_B+ does not exist, the element \inlineisar+Rel_A+ fails to
process, and consequently any attempt to reference it will also fail. This kind of cascading can also be used to detect
proof failures following an update to a model and failed verification.

In addition to argumentation commands, we have also implemented several commands for creating different kinds of
artifacts.

\begin{isar}[numbers=none, backgroundcolor=\color{black!10}, frame=lines]
<ArtifactComm>    := (*@\textcolor{Blue}{Artifact}@*) <gid> ((*@\textcolor{OliveGreen}{version}@*) <string>)? ((*@\textcolor{OliveGreen}{date}@*) <string>)? <Description>
<RequirementComm> := (*@\textcolor{Blue}{Requirement}@*) <gid> ((*@\textcolor{OliveGreen}{version}@*) <string>)? ((*@\textcolor{OliveGreen}{date}@*) <string>)? <Description>
<ResourceComm>    := (*@\textcolor{Blue}{Resource}@*) <gid> ((*@\textcolor{OliveGreen}{location}@*) <URI>) <Description>
<ActivityComm>    := (*@\textcolor{Blue}{Activity}@*) <gid> ((*@\textcolor{OliveGreen}{startTime}@*) <string>)? ((*@\textcolor{OliveGreen}{endTime}@*) <string>)? <Description>
<EventComm>       := (*@\textcolor{Blue}{Event}@*) <gid> ((*@\textcolor{OliveGreen}{occurence}@*) <string>)? <Description>
<ParticipantComm> := (*@\textcolor{Blue}{Participant}@*) <gid> <Description>
<TechniqueComm>   := (*@\textcolor{Blue}{Technique}@*) <gid> <Description>
<ArtifactRelComm> := (*@\textcolor{Blue}{ArtifactRelation}@*) <gid> (*@\textcolor{OliveGreen}{src}@*) <gid>* (*@\textcolor{OliveGreen}{tgt}@*) <gid>* <Description>
\end{isar}

With the exception of \textcolor{Blue}{\inlineisar+Requirement+}, these artifact classes are adopted from the SACM
standard~\cite{Wei2019-SACM}. They allow us to model the various artifacts created during the development and assurance
lifecycle, and the relationships between them, for the purposes of traceability. The
\textcolor{Blue}{\inlineisar+Artifact+} command represents a unit of data produced during the lifecycle, such as a
specification or verification results. It can be annotated with a version, and the creation date. The
\textcolor{Blue}{\inlineisar+Requirement+} command can be used represent requirements, a specialised form of
\textcolor{Blue}{\inlineisar+Artifact+}. The \textcolor{Blue}{\inlineisar+Resource+} command can be used to model a link
to an external resource, such as a standard or code base, which is uniquely represented by a URI. The
\textcolor{Blue}{\inlineisar+Activity+} command models an activity or process, with a start time and end time, and
\textcolor{Blue}{\inlineisar+Event+} similarly represents a timed and dated event. A
\textcolor{Blue}{\inlineisar+Participant+} models an actor that takes part in the lifecycle, such as a developer, and
\textcolor{Blue}{\inlineisar+Technique+} models a technique, such as a modelling language or formal method, that is
applied in the creation of artifacts. Finally, \textcolor{Blue}{\inlineisar+ArtifactRelation+} allows us to relate two
artifacts.

An example using the artifact commands is shown in Figure~\ref{fig:IAL-Artifacts}, which further elaborates the
verification result in Figure~\ref{fig:GSN-IAL}. The formal verification result \inlineisar+FV1+ is an artifact, with
version 1, that points to the Isabelle theorem \inlineisar+vc1+. The result was created during a verification activity,
\inlineisar+VACT1+, as shown using the artifact relation \inlineisar+AR1+. Isabelle was used to perform the proof, which
is modelled using a \textcolor{Blue}{\inlineisar+Resource+} that links to the Isabelle website. The verification
activity was led by a proof engineer, Anne Other, who is modelled as a \textcolor{Blue}{\inlineisar+Participant+}, and
linked to the verification activity by a further artifact relation. The specific technique used for the proof was the
Isabelle simplifier, which is modelled as a \textcolor{Blue}{\inlineisar+Technique+}, and contains a link to the proof
method $simp$.

\begin{figure}
  \begin{center}
    \includegraphics[align=c,width=.75\linewidth]{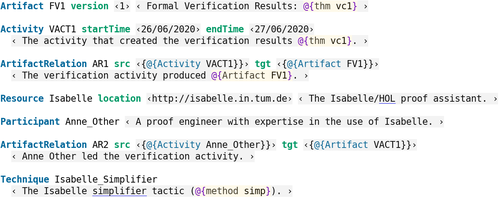}
  \end{center}

  \caption{Artifact Traceability in IAL}
  \label{fig:IAL-Artifacts}
\end{figure}

We have now developed \isacm and our IAL. In the next section, we consider the modelling verification of the Tokeneer
system.

\section{Modelling and Verification of Tokeneer}
\label{sec:model}
In this section we present a novel mechanisation of Tokeneer in Isabelle/UTP~\cite{Foster16a,Foster19a-IsabelleUTP} to
provide evidence for the AC. This model encodes the formal functional specification (\textsf{41\_2}) in the modular
assurance case in Figure~\ref{fig:tis-modular}. In~\cite{TIS-SecurityProperties}, the satisfaction of the SFRs are
argued semi-formally using the functional specification, but here we provide a formal proof. We focus on the
verification of three of the requirements: SFR1 (the most challenging of the six), SFR3, and SFR6, and describe the
necessary model elements.

\subsection{Modelling and Mechanisation}

The TIS functional specification~\cite{TIS-FormalSpec} describes an elaborate state space and a collection of relational
operations. The state is bipartite, consisting of (1) the digital state of the TIS and (2) the monitored and controlled
variables shared with the real world. The TIS monitors the time, enclave door, fingerprint reader, token reader, and
several peripherals. It controls the door latch, an alarm, a display, and a screen.

The specification describes a complex state transition system, with around 50 operations for enrolling the station,
performing various administrative operations, such as archiving log files and updating the configuration file, and the
user entry operations. The main user entry operations are illustrated in Figure~\ref{fig:tis-states}
(cf. \cite[page~43]{TIS-FormalSpec}), where each transition corresponds to an operation. Following enrolment, the TIS
becomes \textsf{quiescent} (awaiting interaction). \textsf{ReadUserToken} triggers if the token is presented, and reads
its contents. Assuming a valid token, the TIS determines whether a fingerprint is necessary, and then triggers either
\textsf{BioCheckRequired} or \textsf{BioCheckNotRequired}. If required, the TIS then reads a fingerprint
(\textsf{ReadFingerOK}), validates it (\textsf{ValidateFingerOK}), and finally writes an authorisation certificate to
the token (\textsf{WriteUserTokenOK}). If the access credentials are available (\textsf{waitingEntry}), then a final
check is performed (\textsf{EntryOK}), and once the user removes their token (\textsf{waitingRemoveTokenSuccess}), the
door is unlocked (\textsf{UnlockDoor}).

\begin{figure}

  \centering\includegraphics[width=\linewidth]{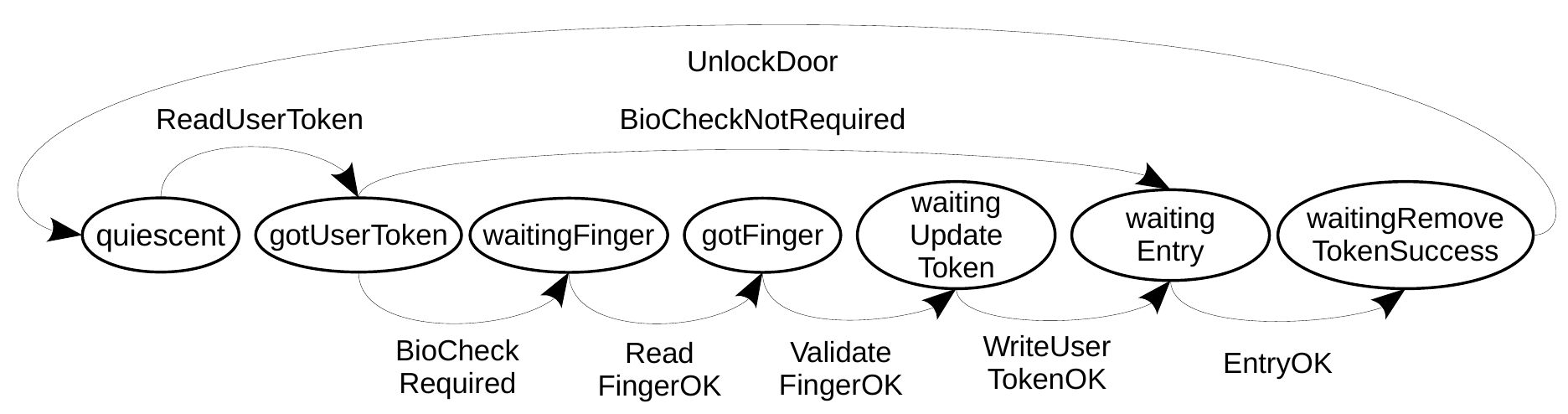}

  \caption{TIS Main States}
  \label{fig:tis-states}
  
\end{figure}

We mechanise the TIS using hierarchical state space types, with invariants adapted from the Z
specification~\cite{TIS-FormalSpec}. We define the operations using \ac{gcl}~\cite{Dijkstra75} rather than the Z schemas
directly, to enable syntax-directed reasoning. The syntax of Z~\cite{Spivey89}, though maximally flexible, does not
easily lend itself to such reasoning, since every operation schema contains a set of conjoined predicates which must be
considered in turn. In contrast, as illustrated in \S\ref{sec:iutp}, it is straightforward to calculate the weakest
precondition of a GCL program. Within these constraints, we have endeavoured to remain faithful to the function
specification by representing each of the state and operation schemas and using the same naming and overall
structure. The use of GCL means that the model is also much closer to a program, and consequently refinement to code
should be straightforward. Moreover, since GCL has a denotational semantics in UTP's relational
calculus~\cite{Hoare&98}, it may be possible to prove equivalence with the corresponding Z operations.

\subsection{State Space}

We first describe the state space of the TIS state machine:
\begin{definition}[TIS types and state space] \isalink{https://github.com/isabelle-utp/utp-main/blob/7abe4b02af634ee70503afc39fab60f46a8cf954/casestudies/Tokeneer/Tokeneer.thy\#L202}
\begin{align*}
\textit{LATCH} &\defs \mv{unlocked} | \mv{locked} \\
\textit{DOOR} &\defs \mv{open} | \mv{closed}  \\
\textit{TOKENTRY} &\defs \mv{noT} | \mv{badT} | \mv{goodT}~\textit{Token} \\
\textit{PRESENCE} &\defs \mv{present} | \mv{absent} \\
\textit{PRIVELEGE} &\defs \mv{userOnly} | \mv{guard} | \mv{securityOfficer} | \mv{auditManager} \\
\textit{ADMINOP} &\defs \mv{archiveLog} | \mv{updateConfigData} | \mv{overrideLock} | \mv{shutdownOp} \\
\textit{FLOPPY} &\defs \mv{noFloppy} | \mv{emptyFloppy} | \mv{badFloppy} | \mv{configFile}~(configFileOf: Config) | \cdots \\
  \textit{Config} &\defs \left[
\begin{array}{l}
  alarmSilentDuration : TIME, latchUnlockDuration : TIME, \\
  tokenRemovalDutation : TIME, enclaveClearance : CLEARANCE, \cdots
\end{array} \right] \\
IDStation &\defs \left[
\begin{array}{l}
  currentUserToken : TOKENTRY, currentTime : TIME, \\ 
  userTokenPresence : PRESENCE, status : STATUS, \\ 
  enclaveStatus : \textit{ENCLAVESTATUS}, currentDisplay : \textit{DISPLAYMESSAGE},  \\
  issuerKey : \textit{USER} \pfun \textit{KEYPART}, rolePresent : \textit{PRIVELEGE}~option, \\
  availableOps : \textit{ADMINOP}~set, currentAdminOp : \textit{ADMINOP}~option, \\
  ifloppy : FLOPPY, config : \textit{Config}, \cdots
\end{array}
\right] \\[.5ex]
Controlled &\defs \left[
\begin{array}{l}
  latch : LATCH, alarm : ALARM, display : DISPLAYMESSAGE, screen : Screen
\end{array}
\right] \\[.5ex]
Monitored &\defs \left[
\begin{array}{l}
  now : TIME, finger : \textit{FINGERPRINTTRY}, \\ 
  userToken : TOKENTRY, floppy : FLOPPY, keyboard : KEYBOARD
\end{array} 
\right] \\[.5ex]
RealWorld &\defs \left[ mon : Monitored, ctrl : Controlled \right] \\[.5ex]
SystemState &\defs [ rw : RealWorld, tis : IDStation ]
\end{align*}
\end{definition}
\noindent A collection of algebraic data types characterise the state of system elements, including the door, latch, and
token. The type $\alpha\,\textit{option}$ represents an optional value that can be either undefined, $\textit{None}$,
or defined, $\textit{Some}~x$ for $x : \alpha$. The function $the : \alpha\,\textit{option} \to \alpha$ allows us to
extract the value from a defined value. We define state types for the TIS state, controlled variables, monitored
variables, real-world, and the entire system, respectively. The controlled variables include the physical latch, the
alarm, the display, and the screen. The monitored variables correspond to time ($now$), the door ($door$), the
fingerprint reader ($finger$), the tokens, and the peripherals. \textit{RealWorld} combines the physical variables, and
\textit{SystemState} composes the physical world ($rw$) and the TIS ($tis$).

Variable \textit{currentUserToken} represents the last token presented to the TIS, and \textit{userTokenPresence}
indicates whether a token is currently present. The variable \textit{status} is used to record the state the TIS is in,
and can take the values indicated in the state bubbles of Figure~\ref{fig:tis-states}. Variable \textit{issuerKey} is a
partial function representing the public key chain, which is needed to authorise user entry. Variables
\textit{rolePresent}, \textit{availableOps}, and \textit{currentAdminOp} are used to represent the presence of an Admin,
the available operations for this Admin, and the current operation being executed.

In addition to the state types, we also encode a number of predicates that represent the invariants of seven Z state
schemas. These effectively encode low-level well-formedness constraints for the types; the higher level invariants are
considered in \S\ref{sec:formal-verify}. The predicate representing the invariants associated with the
Admin variables is shown below.

\begin{definition}[Administrator Invariants] \label{def:admin-invar} \isalink{https://github.com/isabelle-utp/utp-main/blob/7abe4b02af634ee70503afc39fab60f46a8cf954/casestudies/Tokeneer/Tokeneer.thy\#L391}
$$
Admin \defs \left(
\begin{array}{l}
  (\textit{rolePresent} \neq None \implies the(rolePresent) \in \{\mv{guard}, \mv{auditManager}, \mv{securityOfficer}\}) \\
  \land (\textit{rolePresent} = None \implies \textit{availableOps} = \{\}) \\
  \land (\textit{rolePresent} = Some(\mv{guard}) \implies \textit{availableOps} = \{\mv{overrideLock}\}) \\
  \land (\textit{rolePresent} = Some(\mv{auditManager}) \implies \textit{availableOps} = \{\mv{archiveLog}\}) \\
  \land (\textit{rolePresent} = Some(\mv{securityOfficer}) \implies \textit{availableOps} = \{\mv{updateConfigData}, \mv{shutdownOp}\}) \\
  \land (\textit{currentAdminOp} \neq None \implies the(\textit{currentAdminOp}) \in \textit{availableOps} \land \textit{rolePresent} \neq None)
\end{array} \right)
$$
\end{definition}

\noindent This predicate closely corresponds to the $Admin$ schema in the functional
specification~\cite[page~22]{TIS-FormalSpec}. It states, firstly, that if a role is present, it must be one of the three
Admin roles. Conversely, if no roles are present then no Admin operations are available to be executed. The next three
implications assign possible operations to the given Admin roles. The final predicate states that if an Admin operation
is being executed, then it must be one of the available operations and there must be a role present. We collect the seven
well-formedness predicates in \textit{TIS-wf}, as defined below.

\begin{definition}[Well-formedness Properties] \isalink{https://github.com/isabelle-utp/utp-main/blob/7abe4b02af634ee70503afc39fab60f46a8cf954/casestudies/Tokeneer/Tokeneer.thy\#L587}
  $$\textit{TIS-wf} \defs (DoorLatchAlarm \land Floppy \land KeyStore \land Admin \land Config \land AdminToken \land UserToken)$$
\end{definition}

\noindent The verification of the TIS SFRs depends on these state predicates being invariant for all the operations.

\subsection{Operations}

We now specify a selection of the operations over \textit{IDStation}\footnote{The TIS operations have been mechanised
  using the same names as in~\cite{TIS-FormalSpec}.}:
\begin{definition}[User Entry Operations] \label{def:tis-ops} \isalink{https://github.com/isabelle-utp/utp-main/blob/7abe4b02af634ee70503afc39fab60f46a8cf954/casestudies/Tokeneer/Tokeneer.thy\#L1391}
\begin{align*}
  ReadUserToken &\defs
  \begin{array}{l}
  \left(
  \begin{array}{l}
    enclaveStatus \in \left\{\begin{array}{l} \mv{enclaveQuiescent}, \\ \mv{waitingRemoveAdminTokenFail} \end{array}
    \right\}  \\
    \land status = \mv{quiescent} \land userTokenPresence = \mv{present}
  \end{array}\right) \\
  \longrightarrow currentDisplay := \mv{wait} \relsemi status := \mv{gotUserToken}
  \end{array} \\
  BioCheckRequired &\defs 
  \begin{array}{l}
  \left(\begin{array}{l}
    status = \mv{gotUserToken} \land userTokenPresence = \mv{present} \\
    \land UserTokenOK \land (\neg UserTokenWithOKAuthCert)
  \end{array}\right) \\[1ex]
  \longrightarrow status := \mv{waitingFinger}\!\relsemi currentDisplay := \mv{insertFinger}
  \end{array} \\
  ReadFingerOK &\defs
  \begin{array}{l}
  \left(\begin{array}{l}
    status = \mv{waitingFinger} \land fingerPresence = \mv{present} \\
    \land userTokenPresence = \mv{present}
  \end{array}\right) \\
  \longrightarrow status := \mv{gotFinger} \relsemi currentDisplay := \mv{wait}
  \end{array} \\
  UnlockDoor &\defs \left(
   \begin{array}{l}
     latchTimeout := currentTime + latchUnlockDuration \relsemi \\
     alarmTimeout := currentTime + latchUnlockDuration + alarmSilentDuration \relsemi \\
     currentLatch := \mv{unlocked} \relsemi doorAlarm := \mv{silent}
   \end{array} \right) \\
  UnlockDoorOK &\defs
  \begin{array}{l}
  \left(\begin{array}{l}
    status = \mv{waitingRemoveTokenSuccess} \\
    \land userTokenPresence = \mv{absent}
  \end{array}\right) \\
  \longrightarrow \begin{array}{l} UnlockDoor \relsemi status := \mv{quiescent} \relsemi \\ currentDisplay := \mv{doorUnlocked} \end{array}
  \end{array}
\end{align*}
\end{definition}
\noindent Each operation is guarded by execution conditions and consist of several assignments. \textit{BioCheckRequired}
requires that the current state is $\mv{gotUserToken}$, the user token is $\mv{present}$, and sufficient for entry
($UserTokenOK$), but there is no authorisation certificate ($\neg UserTokenWithOKAuthCert$).  The latter two predicates
essentially require that (1) the three certificates can be verified against the public key store, and (2) additionally
there is a valid authorisation certificate present. We give the definition of $UserTokenOK$ below. \isalink{https://github.com/isabelle-utp/utp-main/blob/7abe4b02af634ee70503afc39fab60f46a8cf954/casestudies/Tokeneer/Tokeneer.thy\#L495}
\begin{align*}
  UserTokenOK &\defs \left(\exists t @
  \begin{array}{l}
    currentUserToken = \mv{goodT}(t) \land t \in CurrentToken \land \\
    (\exists c \in IDCert @ idCert(t) = c \land CertOK c) \land \\
    (\exists c \in PrivCert @ privCert(t) = c \land CertOK c) \land \\
    (\exists c \in IandACert @ iandACert(t) = c \land CertOK c)
  \end{array} \right) 
\end{align*}
It requires that $currentUserToken$ contains a token, which is current ($CurrentToken$), and has valid ID, privilege,
and I\&A certificates. The definitions of the omitted predicates can be found
elsewhere~\cite{TIS-FormalSpec}.

Assuming these preconditions hold, operation \textit{BioCheckRequired} updates the state to $\mv{waitingFinger}$ and the
display with an instruction to provide a fingerprint. \textit{ReadFingerOK} requires that the state is
$\mv{waitingFinger}$, and checks whether both finger and user token are present. If they are, then the state switches to
$\mv{gotFinger}$, and the display is updated to $\mv{wait}$. \textit{UnlockDoorOK} requires that the current state is
$\mv{waitingRemoveTokenSuccess}$, and the token has been removed. It unlocks the door, using the auxiliary operation
\textit{UnlockDoor}, returns the status to $\mv{quiescent}$, and updates the display. \textit{UnlockDoor} both unlocks
the latch, and also updates two timeout variables, $latchTimeout$ and $alarmTimeout$. The former is used to close the
latch after a certain period, and the latter to sound an alarm if the door is left open.

These operations act only on the TIS state space. During their execution, monitored variables can also change, to reflect
real-world updates. Mostly these changes are arbitrary, with the exception that time must increase monotonically. We
therefore promote the operations to \textit{SystemState} with the following schema. \isalink{https://github.com/isabelle-utp/utp-main/blob/7abe4b02af634ee70503afc39fab60f46a8cf954/casestudies/Tokeneer/Tokeneer.thy\#L1372}
$$UEC(Op) \defs \uframe{tis}{Op} \relsemi \uframe{rw}{\lns{mon}{now} \le \lns{mon}{now'} \land ctrl' = ctrl}$$
In Z, this functionality is provided by the schema \textit{UserEntryContext}~\cite{TIS-FormalSpec}, from which we derive
the name \textit{UEC}. It promotes $Op$ to act on $tis$, and composes this with a relational predicate that constrains
the real-world variables ($rw$). The behaviour of all monitored variables other than $now$ is arbitrary, and all
controlled variables are unchanged. This separation enables modular reasoning, since we can promote invariants of the
TIS to any real world context using Theorem~\ref{thm:utp-wp} and the following Hoare logic theorem.

\begin{theorem}[TIS Promotion] If $\hoaretriple{I}{P}{I}$ then $\hoaretriple{I_{\uparrow tis}}{UEC(P)}{I_{\uparrow tis}}$ \isalink{https://github.com/isabelle-utp/utp-main/blob/d0356202092c61f7109cb9b30b464df44ce2f299/casestudies/Tokeneer/Tokeneer.thy\#L1386}
\end{theorem}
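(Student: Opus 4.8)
The plan is to reduce the lifted triple to the hypothesis by a weakest-liberal-precondition calculation, working entirely in the wlp calculus rather than unfolding relations. By the wlp characterisation of Hoare triples, $\hoaretriple{I_{\uparrow tis}}{UEC(P)}{I_{\uparrow tis}}$ is equivalent to $I_{\uparrow tis} \implies UEC(P) \uwlp I_{\uparrow tis}$, and the hypothesis $\hoaretriple{I}{P}{I}$ likewise gives $I \implies P \uwlp I$. So it suffices to show $UEC(P) \uwlp I_{\uparrow tis} = (P \uwlp I)_{\uparrow tis}$ and then transport the hypothesis along the $tis$ coercion.

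First I would unfold $UEC(P) = \uframe{tis}{P} \relsemi \uframe{rw}{\lns{mon}{now} \le \lns{mon}{now'} \land ctrl' = ctrl}$ and apply the sequential composition law of Theorem~\ref{thm:utp-wp} (whose wlp form is identical for $\relsemi$), obtaining $UEC(P) \uwlp I_{\uparrow tis} = \uframe{tis}{P} \uwlp \bigl(\uframe{rw}{\cdots} \uwlp I_{\uparrow tis}\bigr)$. Next I would dispatch the inner frame: since $rw$ and $tis$ are independent lenses, the coercion $I_{\uparrow tis}$ mentions no variable of the $rw$ namespace, i.e. $rw \unrest I_{\uparrow tis}$, so the first frame law applies; and because $Q \uwlp true = true$ for any $Q$, this collapses to $\uframe{rw}{\cdots} \uwlp I_{\uparrow tis} = I_{\uparrow tis}$. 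Then I would apply the second frame law to the outer frame: $I_{\uparrow tis}$ uses only variables in $tis$, i.e. $tis \usedby I_{\uparrow tis}$, so $\uframe{tis}{P} \uwlp I_{\uparrow tis} = \bigl(P \uwlp (I_{\uparrow tis})_{\downarrow tis}\bigr)_{\uparrow tis}$, and shrinking a grown predicate along the same lens is the identity, so $(I_{\uparrow tis})_{\downarrow tis} = I$. This yields $UEC(P) \uwlp I_{\uparrow tis} = (P \uwlp I)_{\uparrow tis}$ as required.

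Finally I would close the argument by coercing the hypothesis $I \implies P \uwlp I$ upward along $tis$: since $(-)_{\uparrow tis}$ is monotone with respect to implication and distributes over it, $I_{\uparrow tis} \implies (P \uwlp I)_{\uparrow tis} = UEC(P) \uwlp I_{\uparrow tis}$, which by the wlp characterisation is exactly $\hoaretriple{I_{\uparrow tis}}{UEC(P)}{I_{\uparrow tis}}$.

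The hard part will be the lens-algebraic bookkeeping: establishing, for an arbitrary invariant $I$, the side conditions $rw \unrest I_{\uparrow tis}$ and $tis \usedby I_{\uparrow tis}$ purely from the facts that $I_{\uparrow tis}$ is a coercion along $tis$ and that $rw \lindep tis$, together with the coercion cancellation law $(I_{\uparrow tis})_{\downarrow tis} = I$ and monotonicity of $(-)_{\uparrow tis}$. These are generic properties of state-space coercions and lens independence rather than anything specific to the TIS state space, and in \iutp they are handled by the standard lens and unrestriction automation, but they are where the technical content of the proof actually resides; the wlp rewriting steps above are then a routine calculation.
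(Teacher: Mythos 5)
Your proposal is correct and follows essentially the route the paper itself indicates: the paper gives no written proof beyond the Isabelle link, but it explicitly says the promotion is obtained ``using Theorem~\ref{thm:utp-wp} and the following Hoare logic theorem,'' i.e.\ by reducing the lifted triple through the wlp characterisation and the two frame laws, exactly as you do. Your additional observations --- that $Q \uwlp true = true$ collapses the $rw$-frame, and that the side conditions $rw \unrest I_{\uparrow tis}$, $tis \usedby I_{\uparrow tis}$ and the cancellation $(I_{\uparrow tis})_{\downarrow tis} = I$ are generic lens/coercion facts discharged by automation --- are accurate and consistent with the mechanisation.
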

\noindent This shows that if $I$ is an invariant of $P$, an operation of the TIS, then $I$ in the extended state space
is an invariant of promoted operation. For comparison with the original Z operation schemas, we give the $ReadUserToken$
schema below:

\begin{schema}{ReadUserToken}
        UserEntryContext
\also
        \Xi UserToken
\\	\Xi DoorLatchAlarm
\\      \Xi Stats
\also
        AddElementsToLog
\where
        enclaveStatus \in \{ enclaveQuiescent,
        waitingRemoveAdminTokenFail \}
\also
	status = quiescent
\\	userTokenPresence = present
\also
	currentDisplay' = wait
\\	status' = gotUserToken
\end{schema}
\noindent We have employed a pattern for conversion from Z to GCL. Every conditional predicate, for example
$status = quiescent$, becomes a guard in Definition~\ref{def:tis-ops}. Every primed variable equation, such as
$status' = gotUserToken$, becomes an assignment, and all of the resulting commands are sequentially
composed. Nevertheless, we preserve the non-determinism of the original model, and these assignments are equivalent to
primed variable equations since in UTP we denote an assignment as follows:
$$(x := e) \defs x' = e \land y' = y \land \cdots \land z' = z$$
Consequently, the operations could be expressed as relational expressions. Using $UEC$ we promote each operation, for
example $TISReadToken \defs UEC(ReadToken)$, to achieve the same effect as including $UserEntryContext$.

In Z, invariants of the state are imposed through the inclusion of state schemas, such as $UserToken$. Here, we do not
impose these but we will prove that each operation preserves each invariant in Section~\ref{sec:formal-verify}. This
sometimes requires that we add extra assignments to satisfy the invariant, as we illustrate below.

We next define some of the key admin operations, which are necessary to prove the security properties.
\begin{definition}[Admin Operations] \label{def:admin-ops} \isalink{https://github.com/isabelle-utp/utp-main/blob/7abe4b02af634ee70503afc39fab60f46a8cf954/casestudies/Tokeneer/Tokeneer.thy\#L2429}
\begin{align*}
  OverrideDoorLockOK &\defs
  \begin{array}{l}
    \left(\begin{array}{l}
      enclaveStatus = \mv{waitingStartAdminOp} \\
      \land adminTokenPresence = \mv{present} \\
      \land currentAdminOp = Some(\mv{overrideLock})
    \end{array}\right) \\
    \longrightarrow
    \begin{array}{l}
      screenMsg := \mv{requestAdminOp} \relsemi currentDisplay := \mv{doorUnlocked} \relsemi \\
      enclaveStatus = \mv{enclaveQuiescent} \relsemi UnlockDoor \relsemi \\
      currentAdminOp := None
    \end{array}
  \end{array} \\
  FinishUpdateConfigOK &\defs
  \begin{array}{l}
  \left(\begin{array}{l}
    enclaveStatus = \mv{waitingFinishAdminOp} \\
    \land adminTokenPresence = \mv{present} \\     
    \land currentAdminOp = Some(\mv{updateConfigData}) \\
    \land floppyPresence = \mv{present} \land currentFloppy \in range(configFile) \\
    \land ValidConfig(\textit{configFileOf}(currentFloppy))
  \end{array}\right) \\
    \longrightarrow 
    \begin{array}{l}
      config := \textit{configFileOf}(currentFloppy) \relsemi \\
      screenMsg := \mv{requestAdminOp} \relsemi \\
      screenConfig := displayConfigData(config) \relsemi \\
      enclaveStatus = \mv{enclaveQuiescent} \relsemi currentAdminOp := None
    \end{array}
  \end{array} \\
  AdminLogout &\defs
  \begin{array}{l}
  rolePresent \neq None \\
    \longrightarrow
    \begin{array}{l}
      rolePresent := None \relsemi currentAdminOp := None \relsemi availableOps := \{\}
    \end{array}
  \end{array} \\
  ShutdownOK &\defs
  \begin{array}{l}
  \left(\begin{array}{l}
    enclaveStatus = \mv{waitingStartAdminOp} \\
    \land currentAdminOp = Some(\mv{shutdownOp}) \\
    \land currentDoor = \mv{closed}
  \end{array}\right) \\
    \longrightarrow
    \begin{array}{l}
      LockDoor \relsemi AdminLogout \relsemi screenMsg = \mv{clear} \relsemi \\
      enclaveStatus := \mv{shutdown} \relsemi currentDisplay := \mv{blank}
    \end{array}
  \end{array}
\end{align*}
\end{definition}
\noindent $OverrideDoorLockOK$ allows the door to be unlocked when an Admin has already logged in who can execute the
$\mv{overrideLock}$ command, that is, an Admin with the role $\mv{guard}$. If the enclave is awaiting an Admin command,
an Admin token is present, and the Admin gives the $\mv{overrideLock}$ command, then the door is unlocked and the
enclave returns to awaiting another Admin command. $FinishUpdateConfigOK$ is the second part of a two stage process for
updating the configuration file. The first step checks whether a configuration file floppy has been inserted. In this
second step, if the command \mv{updateConfigData} has been selected and a valid floppy has been inserted, then the
config is updated, displayed on the screen, and the enclave again returns to the main menu. Finally, $ShutdownOK$ is
used to shutdown the TIS. If an Admin is logged in, selects the $\mv{shutdownOp}$ command, and the door is closed, then
the operation locks the door, logs the Admin out, blanks the screen and display, and sets the status to
$\mv{shutdown}$. The auxiliary operation $AdminLogout$ has one more assignment than the corresponding Z
schema~\cite[page~41]{TIS-FormalSpec}, $availableOps := \{\}$, to ensure that the $Admin$ invariants in
Definition~\ref{def:admin-invar} are satisfied. In Z, this is implicit because the invariants are enforced at each
stage.

The overall behaviour of the TIS operations is given below:
\begin{align*}
 TISUserEntryOp &\defs \left(
 \begin{array}{l}
   TISReadUserToken \intchoice TISValidateUserToken \\
   \intchoice TISReadFinger \intchoice TISValidateFinger \\
   \intchoice TISWriteUserToken \intchoice TISValidateEntry \\
   \intchoice TISUnlockDoor \intchoice TISCompleteFailedAccess
 \end{array}\right) \\
 TISAdminOp &\defs \left(
 \begin{array}{l}
  TISOverrideDoorLockOp \intchoice TISShutdownOp \\
  \intchoice TISUpdateConfigDataOp \intchoice TISArchiveLogOp
 \end{array}\right) \\
 TISOp &\defs \left(
 \begin{array}{l}  
   TISEnrolOp \intchoice TISUserEntryOp \intchoice TISAdminLogon \intchoice TISStartAdminOp \\
   \intchoice TISAdminOp \intchoice TISAdminLogout \intchoice TISIdle
 \end{array}\right)
\end{align*}
We omit several operations, though these have all been mechanised. In each iteration of the state machine, we
non-deterministically select an enabled operation and execute it. We also update the controlled variables, which is done
by composition with the following relational update operation.
\begin{align*}
  TISUpdate \defs~& \uframe{rw}{\lns{mon}{now} \le \lns{mon}{now'}} \relsemi \lns{rw}{\lns{ctrl}{latch}} := \lns{tis}{currentLatch} \relsemi  \\
            & \lns{rw}{\lns{ctrl}{display}} := \lns{tis}{currentDisplay}
\end{align*}
This allows time to advance, allows other monitored variables to change, and copies the digital state of the latch and
display to the corresponding controlled variables. The system transitions are described by
$TISOp \relsemi TISUpdate$.

\subsection{Formal Verification}
\label{sec:formal-verify}

In this section, we verify three SFRs of the formal model using Isabelle/UTP. We first formalise the TIS state
invariants necessary to prove the SFRs\footnote{We adopt a different order for the invariants than our mechanisation,
  for the sake of presentation.}:
\begin{definition}[TIS State Invariants Selection] \isalink{https://github.com/isabelle-utp/utp-main/blob/7abe4b02af634ee70503afc39fab60f46a8cf954/casestudies/Tokeneer/Tokeneer.thy\#L527}
  \begin{align*}
    Inv_1 &\defs \left(
      \begin{array}{l}
      status \in 
            \left\{\begin{array}{l}
                     \mv{gotFinger}, \mv{waitingFinger}, \mv{waitingUpdateToken} \\
                     \mv{waitingEntry}, \mv{waitingUpdateTokenSuccess}
                   \end{array}\right\} \\
      \implies (UserTokenWithOKAuthCert \lor UserTokenOK)
      \end{array} \right) \\ %
    Inv_2 &\defs \left(
      \begin{array}{l}
      status \in \{\mv{waitingEntry}, \mv{waitingRemoveTokenSuccess}\} \\ 
        \implies (UserTokenWithOKAuthCert \lor FingerOK)
      \end{array} \right) \\ %
    Inv_3 &\defs (rolePresent \neq None \implies AdminTokenOK) \\ %
    Inv_4 &\defs \left(
       \begin{array}{l}
       currentAdminOp \in \{Some(\mv{shutdownOp}), Some(\mv{overrideLock})\} \\
       \implies ownName \neq None
       \end{array} \right) \\ %
    Inv_5 &\defs \left(
       \begin{array}{l}
       adminTokenPresent = present \land rolePresent \neq None \\       
       \implies rolePresent = Some(role(authCert(ofGoodT(currentAdminToken))))
       \end{array} \right) \\ %
    \textit{TIS-inv} &\defs \textit{TIS-wf} \land  Inv_1 \land Inv_2 \land Inv_3 \land Inv_4 \land Inv_5 \cdots 
  \end{align*}
\end{definition}
\noindent $Inv_1$ states that whenever the TIS is in a state beyond $\mv{gotUserToken}$, then either a valid
authorisation certificate is present, or else the user token is valid. It corresponds to the first invariant in the
$IDStation$ schema~\cite[page~26]{TIS-FormalSpec}. However, we need to add an extra state, $\mv{updateTokenSuccess}$ and
strengthen the consequent. The consequent originally only requires that there is a token with a valid authorisation
certificate, which may not be the case if a fingerprint has not yet been taken. $Inv_2$ states that whenever in state
$\mv{waitingEntry}$ or $\mv{waitingRemoveTokenSuccess}$, then either an authorisation certificate or a valid fingerprint
is present. $Inv_2$ is not present at all in~\cite{TIS-FormalSpec}, but we found it necessary to satisfy SFR1,
specifically to ensure that a valid fingerprint is present. That certain invariants are missing, or too weak, is
acknowledged in the TIS Security Properties document~\cite[page~11]{TIS-SecurityProperties}, but this does not
invalidate the functional specification; it just makes it tricky to formally verify the SFRs.

\begin{figure}
  \begin{center}
    \includegraphics[width=.8\linewidth]{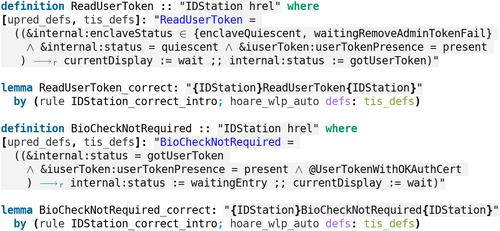}
  \end{center}
  
  \caption{Verification of Tokeneer Invariants in Isabelle/UTP}
  \label{fig:tok-inv-isa}
\end{figure}

$Inv_3$ states that whenever an Admin role is present, this means that a valid Admin token is also present
($AdminTokenOK$), so that it is not necessary to explicitly check this in each Admin operation. Similar to $Inv_1$, it is
equivalent to the second $IDStation$ schema invariant~\cite[page~26]{TIS-FormalSpec}, but we again needed to strengthen
the consequent. $Inv_4$ states that if an Admin operation $\mv{shutdownOp}$ or $\mv{overrideLock}$ is selected, then the
TIS must have an assigned name (also present in the key store), and hence it must already be enrolled. Finally, $Inv_5$
states that if an Admin token and Admin role are both present, then the role must match with the one contained on the
admin token. This invariant does not seem to be present at all in \cite{TIS-FormalSpec}, but we believe it is certainly
necessary to prove SFR1. We elide the additional five invariants that deal with administrators, the alarm, and audit
data~\cite{TIS-FormalSpec}.

As before, and differently to \cite{TIS-FormalSpec}, which imposes the invariants by construction, we prove that each
operation preserves the invariants using Hoare logic, similar to~\cite{Rivera2016-Undertakingtokeneerchallenge}:

\begin{theorem}[TIS Operation Invariants] \label{thm:tis-inv} \isalink{https://github.com/isabelle-utp/utp-main/blob/7abe4b02af634ee70503afc39fab60f46a8cf954/casestudies/Tokeneer/Tokeneer.thy\#L1807}
  \begin{itemize}
  \item $\hoaretriple{\textit{TIS-inv}}{TISUserEntryOp}{\textit{TIS-inv}}$
  \item $\hoaretriple{\textit{TIS-inv}}{TISAdminOp}{\textit{TIS-inv}}$
  \end{itemize}
\end{theorem}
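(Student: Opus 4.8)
The plan is to reduce each of the two Hoare triples to a finite collection of elementary verification conditions, one per concrete operation, and then discharge those by weakest-precondition calculus. First I would unfold $TISUserEntryOp$ and $TISAdminOp$ into their constituent internal choices (unfolding any composite admin operation such as $TISUpdateConfigDataOp$ into its own sub-choices as well). Since the $\uwlp$ law for nondeterministic choice distributes conjunctively, $(P \intchoice Q) \uwlp b = (P \uwlp b) \land (Q \uwlp b)$, and since $\hoaretriple{p}{Q}{r} \iff (p \implies Q \uwlp r)$, it follows that $\hoaretriple{I}{P \intchoice Q}{I}$ holds iff both $\hoaretriple{I}{P}{I}$ and $\hoaretriple{I}{Q}{I}$ hold. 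Applying this repeatedly, the two statements split into one obligation $\hoaretriple{\textit{TIS-inv}}{TISOp_i}{\textit{TIS-inv}}$ for every promoted operation $TISOp_i$ (e.g. $TISReadUserToken$, $TISValidateFinger$, $TISWriteUserToken$, $TISOverrideDoorLockOp$, $TISShutdownOp$, $TISArchiveLogOp$, and so on).

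Next, each $TISOp_i$ has the shape $UEC(Op_i)$ with $Op_i$ acting purely on the $IDStation$ component, and $\textit{TIS-inv}$ is a predicate over $IDStation$; hence it suffices to prove $\hoaretriple{\textit{TIS-inv}}{Op_i}{\textit{TIS-inv}}$ over the inner state space, the real-world framing in $UEC$ being handled once and for all by the TIS Promotion theorem. Each $Op_i$ is a guarded command $g_i \longrightarrow (x_1 := e_1 \relsemi \cdots \relsemi x_n := e_n)$, possibly with inlined auxiliary blocks such as $UnlockDoor$, $LockDoor$, or $AdminLogout$, which are themselves sequences of assignments. Using Theorem~\ref{thm:utp-wp} together with the $\uwlp$ rule for the guard, $Op_i \uwlp \textit{TIS-inv}$ computes to a pure predicate $g_i \implies \textit{TIS-inv}[e_n/x_n]\cdots[e_1/x_1]$, so the Hoare triple collapses to the first-order verification condition $\textit{TIS-inv} \land g_i \implies \textit{TIS-inv}[\,\cdot\,]$. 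This is precisely what the \inlineisar+hoare_wlp_auto+ tactic performs; \inlineisar+rel_auto+, together with case analysis on the finite datatypes ($\textit{STATUS}$, $\textit{PRIVELEGE}$, $\textit{LATCH}$, $\textit{FLOPPY}$, $\ldots$) and unfolding of the auxiliary predicates $UserTokenOK$, $AdminTokenOK$, $FingerOK$, $\ldots$, then closes each goal. The operations whose assignments touch no variable occurring in $\textit{TIS-wf}$ or $Inv_1$--$Inv_5$ are immediate, since the substituted invariant is syntactically unchanged.

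The substantive work, and the main obstacle, is the handful of operations that \emph{do} mutate the sensitive variables: token reading and validation and fingerprint validation (for $Inv_1$, $Inv_2$, and the $UserToken$ well-formedness), and admin logon, start-admin-op, and the admin operations of Definition~\ref{def:admin-ops} (for $Inv_3$, $Inv_4$, $Inv_5$, and the $Admin$ well-formedness). For these the verification condition is provable only because the invariant set has been deliberately adjusted — the extra state $\mv{waitingUpdateTokenSuccess}$ and the strengthened consequent of $Inv_1$, the entirely new $Inv_2$ and $Inv_5$, the strengthened $Inv_3$ — and because the operations have been instrumented with the additional assignments needed to re-establish the invariant (for instance $availableOps := \{\}$ in $AdminLogout$). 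So the real challenge is not the calculus, which is mechanical, but discovering the exact mutually inductive strengthening of $\textit{TIS-inv}$ and the minimal instrumentation of the operations under which every VC goes through; with the original specification's weaker invariants the proof fails on precisely these operations.
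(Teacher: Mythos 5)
Your proposal is correct and follows essentially the same route as the paper: the mechanised proof likewise splits the two triples into one verification condition per constituent operation (32 in total), reduces the promoted operations $UEC(Op_i)$ to the inner $IDStation$ state via the TIS Promotion theorem, computes the guarded-command/assignment $\ckey{wlp}$ via Theorem~\ref{thm:utp-wp}, and discharges the resulting first-order conditions automatically with \texttt{hoare\_wlp\_auto} and \texttt{rel\_auto}, the only structural difference being that the paper first applies an introduction rule (\texttt{IDStation\_correct\_intro}) to separate the well-formedness conjuncts from the behavioural invariants before running the tactic. Your closing observation --- that the real difficulty lies in strengthening $\textit{TIS-inv}$ (adding $Inv_2$, $Inv_5$, the extra state in $Inv_1$) and instrumenting operations such as $AdminLogout$ so that every VC closes --- is exactly the point the paper emphasises.
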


\noindent This theorem shows that the user entry and admin operations never violate the well-formedness properties and
ten state invariants. We can therefore assume that they hold to satisfy any requirements. The proof involves discharging
verification conditions for a total of 32 operations in Isabelle/UTP, a process that is automated using our proof
tactics \inlineisar+hoare_auto+~\cite{Foster2020-IsabelleUTP} and \inlineisar+hoare_wlp_auto+. We illustrate this in
Figure~\ref{fig:tok-inv-isa} for two of the defined operations. We follow the mathematical notation for GCL as much as
possible. Each proof first applies an introduction rule, \inlineisar+IDStation_correct_intro+, that splits the goal into
the well-formedness and behavioural invariants. Then, \inlineisar+hoare_wlp_auto+ is applied to each resulting
goal. This high-level automation means that proofs can be adapted for small changes to the operations with minimal
intervention.

We use this fact to assure SFR1, which is formalised by the formula FSFR1, that characterises the conditions under which
the latch will become unlocked having been previously locked. We can determine these states by application of the
weakest precondition calculus~\cite{Dijkstra75}, which mirrors the (informal) Z schema domain calculations in~\cite[page
5]{TIS-SecurityProperties}. Specifically, we characterise the weakest precondition under which execution of
\textit{TISOp} followed by \textit{TISUpdate} leads to a state satisfying $\lns{rw}{\lns{ctrl}{latch}} =
\mv{unlocked}$. We formalise this in the definition below.

\begin{definition}[Formalisation of SFR1] \label{def:fsfr1} \isalink{https://github.com/isabelle-utp/utp-main/blob/7abe4b02af634ee70503afc39fab60f46a8cf954/casestudies/Tokeneer/Tokeneer.thy\#L2887}
  \begin{align*}
    \textit{AdminTokenGuardOK} & \defs \left(
      \begin{array}{l}
        \exists t \in TokenWithValidAuth @ currentAdminToken = \mv{goodT}(t)\\
        \land (\exists c \in AuthCert @ authCert(t) = Some(c) \land role(c) = guard)
      \end{array} \right) \\
    \textit{FSFR1} & \defs \left(
    \begin{array}{l}
    \left(\begin{array}{l}\textit{TIS-inv} \land \lns{tis}{currentLatch} = \mv{locked} \\
      \land (TISOp \relsemi TISUpdate)\mathop{\ckey{wp}\,}(\lns{rw}{\lns{ctrl}{latch}} = \mv{unlocked})
      \end{array} \right) \\
      \quad \implies
      \left(\begin{array}{l}
              (\textit{UserTokenOK} \land \textit{FingerOK}) \lor \textit{UserTokenWithOKAuthCert} \\
              \lor \textit{AdminTokenGuardOK}
            \end{array}\right)
    \end{array} \right)
  \end{align*}
\end{definition}
\noindent We first state the unlocking precondition for $TISOp$ using $\ckey{wp}$ calculus. Then, we conjoin the
\ckey{wp} formula with $\lns{tis}{currentLatch} = \mv{locked}$ to capture behaviours when the latch was initially
locked. The only operation that unlocks the door for users is $UnlockDoorOK$, and for admins it is
$OverrideDoorLockOK$. As a result, we can calculate the following unlocking preconditions.
\begin{theorem}[Unlocking Preconditions] \isalink{https://github.com/isabelle-utp/utp-main/blob/7abe4b02af634ee70503afc39fab60f46a8cf954/casestudies/Tokeneer/Tokeneer.thy\#L2894}
$$\begin{array}{l}
    ((TISUserOp \relsemi TISUpdate)\mathop{\ckey{wp}\,}(\lns{rw}{\lns{ctrl}{latch}} = \mv{unlocked}))[\mv{locked}/\lns{tis}{currentLatch}] \\[1ex]
    \qquad = (status = \mv{waitingRemoveTokenSuccess} \land userTokenPresence = \mv{absent}) \\[3ex]
    ((TISAdminOp \relsemi TISUpdate)\mathop{\ckey{wp}\,}(\lns{rw}{\lns{ctrl}{latch}} = \mv{unlocked}))[\mv{locked}/\lns{tis}{currentLatch}] \\[1ex]
    \qquad = \left(
    \begin{array}{l}
      enclaveStatus = \mv{waitingStartAdminOp} \land adminTokenPresence = \mv{present} \\
      \land currentAdminOp = Some(\mv{overrideLock}) \land rolePresent \neq None \land currentAdminOp \neq None
    \end{array}\right)
  \end{array}$$
\end{theorem}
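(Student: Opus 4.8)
The plan is to treat both equalities as direct weakest precondition calculations driven by Theorem~\ref{thm:utp-wp}. First I would unfold $TISUserOp$ (respectively $TISAdminOp$) as the displayed non-deterministic choice of the eight user-entry operations of Definition~\ref{def:tis-ops} (respectively the four admin operations of Definition~\ref{def:admin-ops}), and use the choice law $(P \intchoice Q) \uwp b = P \uwp b \lor Q \uwp b$ to turn the left-hand side into a finite disjunction with one disjunct per constituent operation. The trailing composition with $TISUpdate$ is pushed inwards using $(P \relsemi Q) \uwp b = P \uwp (Q \uwp b)$, so everything reduces to computing $Op \uwp \bigl(TISUpdate \uwp (\lns{rw}{\lns{ctrl}{latch}} = \mv{unlocked})\bigr)$ for each operation $Op$.

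Next I would discharge the $TISUpdate$ and $UEC$ wrapping. Calculating $TISUpdate \uwp (\lns{rw}{\lns{ctrl}{latch}} = \mv{unlocked})$ backwards: the $display$ assignment leaves the postcondition unchanged, the assignment $\lns{rw}{\lns{ctrl}{latch}} := \lns{tis}{currentLatch}$ rewrites it to $\lns{tis}{currentLatch} = \mv{unlocked}$, and the monotone-time frame $\uframe{rw}{\lns{mon}{now} \le \lns{mon}{now'}}$ leaves that untouched by the first frame law of Theorem~\ref{thm:utp-wp}, whose proviso $rw \unrest (\lns{tis}{currentLatch} = \mv{unlocked})$ holds because $tis$ and $rw$ are independent lenses and whose frame body satisfies $(\lns{mon}{now} \le \lns{mon}{now'}) \uwp true = true$. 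So $TISUpdate \uwp (\lns{rw}{\lns{ctrl}{latch}} = \mv{unlocked}) = (\lns{tis}{currentLatch} = \mv{unlocked})$. Each constituent is of the form $UEC(Op) = \uframe{tis}{Op} \relsemi \uframe{rw}{\cdots}$; the outer $\uframe{rw}{\cdots}$ disappears exactly as above, and the inner $\uframe{tis}{Op}$ is handled by the second frame law, since $tis \usedby (\lns{tis}{currentLatch} = \mv{unlocked})$, with the state-space coercions $P_{\downarrow tis}$, $P_{\uparrow tis}$ translating between $SystemState$ and the inner $IDStation$ state. The goal is thereby reduced to $\bigvee_{Op} \bigl(Op \uwp (currentLatch = \mv{unlocked})\bigr)$, computed over $IDStation$ and coerced back.

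The answer emerges from the per-operation step. For each $Op$, $Op \uwp (currentLatch = \mv{unlocked})$ is the guard of $Op$ conjoined with $\mathit{body}(Op) \uwp (currentLatch = \mv{unlocked})$, computed by the assignment and sequencing laws. The only relevant fact is that the auxiliary operation $UnlockDoor$ contains the assignment $currentLatch := \mv{unlocked}$, so $UnlockDoor \uwp (currentLatch = \mv{unlocked}) = true$; hence $UnlockDoorOK$ (for users) and $OverrideDoorLockOK$ (for admins) contribute precisely their guards, whereas every other operation leaves $currentLatch$ unassigned, so the conjunct $currentLatch = \mv{unlocked}$ survives in its disjunct. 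Applying the substitution $[\mv{locked}/\lns{tis}{currentLatch}]$ then collapses every surviving disjunct to $false$ since $\mv{locked} \neq \mv{unlocked}$, leaving only the guard of $UnlockDoorOK$, namely $status = \mv{waitingRemoveTokenSuccess} \land userTokenPresence = \mv{absent}$, in the user case, and the guard of $OverrideDoorLockOK$ in the admin case. The extra conjuncts $rolePresent \neq None$ and $currentAdminOp \neq None$ in the latter are either trivially redundant ($currentAdminOp = Some(\mv{overrideLock})$ entails $currentAdminOp \neq None$) or come from the enclave-context guards wrapping the admin operations, and drop out of the final simplification.

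In practice this whole calculation is performed by the $\ckey{wp}$ tactic underlying \inlineisar+hoare_wlp_auto+, not by hand, so the main obstacle is bookkeeping rather than insight: there are twelve guarded blocks of several assignments each, and for each the frame laws force careful management of the coercions $P_{\uparrow tis}$, $P_{\downarrow tis}$ and discharge of the unrestriction / used-by provisos, all of which reduce to lens independence of the $tis$ and $rw$ namespaces. The one genuinely delicate point is verifying that, for the non-unlocking operations, the coerced disjunct really does retain $\lns{tis}{currentLatch} = \mv{unlocked}$ as a free conjunct so that the final substitution annihilates it; with that confirmed, the two equalities follow by routine relational simplification with \inlineisar+rel_auto+.
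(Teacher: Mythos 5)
Your proposal is correct and follows essentially the same route as the paper: a weakest-precondition calculation via Theorem~\ref{thm:utp-wp} over the non-deterministic choice of operations, with the key observation that only $UnlockDoorOK$ (via $UnlockDoor$) and $OverrideDoorLockOK$ assign $\mv{unlocked}$ to the latch, so the substitution $[\mv{locked}/\lns{tis}{currentLatch}]$ annihilates every other disjunct and leaves just the guards of the two unlocking operations. The paper's own justification is exactly this argument, stated more tersely and discharged mechanically by the $\ckey{wp}$ and relational calculus tactics.
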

\noindent The first equation shows that precondition for a user unlock is that access is permitted and the token has
been removed. The second equation shows that the precondition for an Admin unlock is that the TIS is waiting for an
Admin command, an Admin token is present, and the selected command is $\mv{overrideLock}$. From these equations we can
calculate the unlocking precondition of $TISOp$ itself, which is the disjunction of the two preconditions above. We can
then conjoin this with $\textit{TIS-Inv}$, since we know it holds in any state. We show that this composite precondition
implies that either a valid user token and fingerprint were present (using $Inv_2$) or a valid authorisation
certificate, or else an Admin is present (using $Inv_5$), and we can use the well-formedness invariant $Admin$ to show
that this Admin must have the $\mv{guard}$ role. Consequently, $FSFR1$ can indeed be verified.

\begin{theorem}[$FSFR1$ is provable] \label{thm:fsfr1} \isalink{https://github.com/isabelle-utp/utp-main/blob/7abe4b02af634ee70503afc39fab60f46a8cf954/casestudies/Tokeneer/Tokeneer.thy\#L2921}
\end{theorem}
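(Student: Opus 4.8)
The plan is to reduce $FSFR1$ to a finite case analysis over the states in which the latch can pass from $\mv{locked}$ to $\mv{unlocked}$, and then discharge each case using the state invariants that Theorem~\ref{thm:tis-inv} guarantees to hold. First I would unfold $FSFR1$ and rewrite the term $(TISOp \relsemi TISUpdate)\,\ckey{wp}\,(\lns{rw}{\lns{ctrl}{latch}} = \mv{unlocked})$ under the assumption $\lns{tis}{currentLatch} = \mv{locked}$. Using the $\intchoice$ law of Theorem~\ref{thm:utp-wp}, the weakest precondition distributes as a disjunction over the summands of $TISOp$; every summand except $TISUserEntryOp$ and $TISAdminOp$ either leaves $latch$ unchanged or is disabled, so its contribution becomes $false$ once $\mv{locked}$ is substituted for $\lns{tis}{currentLatch}$. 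For the two remaining summands I would invoke the Unlocking Preconditions theorem, which collapses the precondition to the disjunction of the user condition $status = \mv{waitingRemoveTokenSuccess} \land userTokenPresence = \mv{absent}$ and the admin condition $enclaveStatus = \mv{waitingStartAdminOp} \land adminTokenPresence = \mv{present} \land currentAdminOp = Some(\mv{overrideLock}) \land rolePresent \neq None$. The theorem then reduces to: assuming $\textit{TIS-inv}$ together with one of these two conditions, derive $(\textit{UserTokenOK} \land \textit{FingerOK}) \lor \textit{UserTokenWithOKAuthCert} \lor \textit{AdminTokenGuardOK}$.

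In the user case I would use $Inv_2$: since $status = \mv{waitingRemoveTokenSuccess}$ is one of its trigger states, we get $\textit{UserTokenWithOKAuthCert} \lor \textit{FingerOK}$. The first disjunct already matches the conclusion. For the second, I would recover $\textit{UserTokenOK}$ by appeal to $Inv_1$ and the definition of $\textit{FingerOK}$ — a matching fingerprint presupposes a valid I\&A certificate, and the remaining certificate checks of $\textit{UserTokenOK}$ follow from the invariant carried through the preceding entry states — so that $\textit{UserTokenOK} \land \textit{FingerOK}$ holds and the user case closes.

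In the admin case I would argue as follows. From $rolePresent \neq None$ and $Inv_3$ we obtain $\textit{AdminTokenOK}$, so $currentAdminToken$ is a good token carrying a valid authorisation certificate. From $currentAdminOp = Some(\mv{overrideLock})$ and the well-formedness invariant $Admin$ (Definition~\ref{def:admin-invar}) we get $\mv{overrideLock} \in \textit{availableOps}$ and that $rolePresent$ is one of the three admin roles; inspecting the three role-to-operations clauses of $Admin$, the only role whose available-operations set contains $\mv{overrideLock}$ is $\mv{guard}$, so $rolePresent = Some(\mv{guard})$. Then $Inv_5$, using $adminTokenPresence = \mv{present}$ and $rolePresent \neq None$, forces $role(authCert(ofGoodT(currentAdminToken))) = \mv{guard}$, which together with $\textit{AdminTokenOK}$ yields $\textit{AdminTokenGuardOK}$, the third disjunct. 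Combining the two cases establishes $FSFR1$.

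I expect the reduction of the $\ckey{wp}$ term to be the most laborious ingredient — it rests on the Unlocking Preconditions theorem, whose proof ranges over roughly thirty operation definitions and the frame-operator laws of Theorem~\ref{thm:utp-wp} — but that work is already packaged as a prior result. Within the present proof the genuinely delicate step is the admin-role elimination: it only goes through because $Inv_5$ pins the role recorded on the token and because $Admin$ rigidly partitions operations by role, so that an $\textit{auditManager}$ or $\textit{securityOfficer}$ cannot hold an $\mv{overrideLock}$ capability. In Isabelle/UTP the final case split is highly automatable — unfolding $\textit{TIS-inv}$, the invariants $Inv_1$--$Inv_5$, $Admin$, and the token predicates, then calling \inlineisar+auto+ or sledgehammer, should suffice — but the argument is intentionally sensitive to exactly which invariants are present, which is why the newly added $Inv_2$ and $Inv_5$ are indispensable.
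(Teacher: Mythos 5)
Your proposal follows essentially the same route as the paper: the published proof is simply ``by application of weakest precondition and relational calculus,'' and the surrounding text spells out exactly your argument --- distribute $\ckey{wp}$ over the nondeterministic choice, reduce to the two unlocking preconditions, discharge the user case via $Inv_2$ and the admin case via $Inv_3$, $Inv_5$, and the role-partitioning clauses of the $Admin$ well-formedness invariant to force the $\mv{guard}$ role. The only wrinkle is your appeal to $Inv_1$ to recover $\textit{UserTokenOK}$ in state $\mv{waitingRemoveTokenSuccess}$, which is not in $Inv_1$'s displayed trigger set --- but the paper itself glosses over this same detail (citing only $Inv_2$ for the user case), so your reconstruction is as complete as the published argument.
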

\begin{proof}
  By application of weakest precondition and relational calculus.
\end{proof}

\noindent Proof of $SFR2$ can likely be achieved in a similar way to $SFR1$, but more complex additional invariants are
required that depend on time, which we have not been able to formalise for this case study (see \S\ref{sec:disc}).

Next, we consider $SFR3$, which requires that an alarm is raised if the door is left open. This property can be proved
more straightforwardly, since it is essentially a property of the well-formedness invariant in
$DoorLatchAlarm$~\cite[page~23]{TIS-FormalSpec}.
\begin{definition}[Formalisation of SFR3] $ $ \isalink{https://github.com/isabelle-utp/utp-main/blob/7abe4b02af634ee70503afc39fab60f46a8cf954/casestudies/Tokeneer/Tokeneer.thy\#L2976} %
  $$
  FSFR3 \defs \left(\begin{array}{l}
    IDStation \land
    currentLatch = lock \land currentDoor = open \land currentTime \ge alarmTimeout \\
    \qquad \implies doorAlarm = alarming
  \end{array}\right)
  $$
\end{definition}
This states that if the invariants hold, the latch is locked, the door is open, and the time has advanced beyond the
alarm timeout, then the door alarm is sounding. By Theorem~\ref{thm:tis-inv}, $DoorLatchAlarm$ always holds and
therefore FSFR3 can be verified.

\begin{theorem}[$FSFR3$ is provable] \label{thm:fsfr3} \isalink{https://github.com/isabelle-utp/utp-main/blob/7abe4b02af634ee70503afc39fab60f46a8cf954/casestudies/Tokeneer/Tokeneer.thy\#L2983}
\end{theorem}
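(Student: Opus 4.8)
\noindent The plan is to note that $FSFR3$ is an implication over a single state, not a Hoare triple, so no weakest-precondition reasoning is needed: it reduces to a fact about the well-formedness invariant $DoorLatchAlarm$. First I would discharge the $IDStation$ conjunct of the antecedent. The state invariants are bundled into $IDStation$ (equivalently into $\textit{TIS-wf}$), and $DoorLatchAlarm$ is one of the seven conjuncts of $\textit{TIS-wf}$, so from $IDStation$ one immediately extracts $DoorLatchAlarm$. Where the mechanisation keeps the invariants separate from the bare state record, this step instead appeals to Theorem~\ref{thm:tis-inv}, which establishes that $\textit{TIS-wf}$, and hence $DoorLatchAlarm$, holds in every reachable state.

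Next I would unfold the definition of $DoorLatchAlarm$. Following the corresponding Z schema (see page~23 of the functional specification), it contains a conjunct relating $doorAlarm = \mv{alarming}$ to the combined condition that the door is open, the latch is locked, and the current time has passed the alarm timeout (in Z, a biconditional). The direction we need says that this combined condition forces $doorAlarm = \mv{alarming}$, which is precisely $FSFR3$: instantiating it with the remaining antecedents $currentDoor = \mv{open}$, $currentLatch = \mv{locked}$, and $currentTime \ge alarmTimeout$ discharges the goal by modus ponens. Mechanically this is a one-liner in Isabelle/UTP: after unfolding $FSFR3$, $IDStation$ (or applying the invariance fact from Theorem~\ref{thm:tis-inv}), and $DoorLatchAlarm$, the goal is pure predicate calculus and is closed by the relational tactic \inlineisar+rel_auto+; no $\ckey{wp}$ calculation is required because no operation appears in the statement.

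There is essentially no mathematical obstacle here: the real work was done once and for all in proving $DoorLatchAlarm$ invariant (Theorem~\ref{thm:tis-inv}). The only things that need care are bookkeeping --- making sure the relevant conjunct inside $DoorLatchAlarm$ is unfolded so the automation can see it, and reconciling minor naming mismatches (writing $\mv{locked}$ for the latch state, and checking that the timeout comparison in the mechanised invariant is oriented the same way as the $\ge$ in $FSFR3$). If the mechanised invariant phrases the alarm condition via an auxiliary ``insecure'' predicate or a disjunctive form, a little propositional rewriting may be needed before the tactic succeeds, but nothing deeper.
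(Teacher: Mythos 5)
Your proposal matches the paper's argument: the paper likewise observes that $FSFR3$ is essentially a property of the well-formedness invariant $DoorLatchAlarm$, appeals to Theorem~\ref{thm:tis-inv} for the fact that this invariant always holds, and concludes that $FSFR3$ follows directly, with no weakest-precondition calculation. Your additional remarks about unfolding the relevant conjunct and closing the goal with relational/propositional automation are consistent with how the mechanisation discharges it.
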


Finally, we consider SFR6, which requires that the configuration and floppy can change only when an admin is logged
on. In order to verify this, we need to reason about the variables a given operation can modify. In Isabelle/UTP, we can
answer such framing questions using lenses~\cite{Foster07,Foster2020-IsabelleUTP}. We define a novel modification
predicate, $P \nmods a$, which states that relation $P$ does not modify any of the variables captured by $a$. This is
equivalent to stating that $P$ is a fixed point of the function $M(X) \defs (a' = a \land X)$. We prove the following
modification laws for this predicate.

\begin{theorem}[Modification Predicate] \label{thm:nmods} $ $ \isalink{https://github.com/isabelle-utp/utp-main/blob/7abe4b02af634ee70503afc39fab60f46a8cf954/utp/utp_rel.thy\#L704} %
  \vspace{1ex}
  
  \begin{tabular}{ccccccc}
    \AxiomC{---\vphantom{$P$}}
    \UnaryInfC{$\ckey{skip} \nmods x$}
    \DisplayProof
    &
    \AxiomC{---\vphantom{$P$}}
    \UnaryInfC{$\ckey{abort} \nmods x$}
    \DisplayProof
    &
    \AxiomC{$P \nmods x$}
    \AxiomC{$Q \nmods x$}
    \BinaryInfC{$(P \relsemi Q) \nmods x$}
    \DisplayProof
    \\[3ex]
    \AxiomC{$P \nmods x$}
    \AxiomC{$Q \nmods x$}
    \BinaryInfC{$(P \intchoice Q) \nmods x$}
    \DisplayProof
    &
    \AxiomC{$x \lindep y$}
    \UnaryInfC{$(y := v) \nmods x$}
    \DisplayProof
    &  
    \AxiomC{$P \nmods x$}
    \UnaryInfC{$(b \longrightarrow P) \nmods x$}
    \DisplayProof
    \\[3ex]
    \AxiomC{$x \notin a$}
    \UnaryInfC{$\uframe{a}{P} \nmods x$}
    \DisplayProof
    &
    \AxiomC{$P \nmods x$}
    \UnaryInfC{$\uframe{a}{P} \nmods a{:}x$}
    \DisplayProof
  \end{tabular}
\end{theorem}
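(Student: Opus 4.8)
The plan is to prove the eight rules by unfolding the definition of $\nmods$. Recall that $P \nmods a$ holds exactly when $P$ is a fixed point of $M(X) \defs (a' = a \land X)$; I would first record the elementary observation that this is equivalent to the relational entailment $P \implies a' = a$, i.e.\ that whenever $P$ relates a state $s$ to $s'$, the lens $a$ has the same value in $s$ and $s'$. Under this reformulation the bulk of the rules become routine relational reasoning, and I would work through them roughly in the order they are displayed.

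The base cases $\ckey{skip} \nmods x$ and $\ckey{abort} \nmods x$ follow immediately by unfolding the denotations of $\ckey{skip}$ (which fixes the entire state) and $\ckey{abort}$. The composition rule holds because, if both $P$ and $Q$ entail $x' = x$, then in $P \relsemi Q$ the intermediate state agrees with both endpoints on $x$, so $x$ is preserved overall; the non-deterministic choice rule is immediate, since $\intchoice$ is disjunction and $P \lor Q$ entails $x' = x$ whenever both disjuncts do; and the guard rule holds because $b \longrightarrow P$ entails $P$. The assignment rule $x \lindep y \implies (y := v) \nmods x$ is the first place where lenses matter: $y := v$ only updates the $y$-region of the state (via $\lput$), and the independence of $x$ and $y$ gives that reading $x$ after such an update returns the original value, hence $x' = x$. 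Each of these is a short calculation once the entailment characterisation is available, and in Isabelle I would expect the relational and lens tactics (e.g.\ \inlineisar+rel_auto+), supplied with the relevant independence facts, to discharge them.

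The two frame rules are the substantive part, and I expect them to be the main obstacle, because they require unfolding the denotation of the frame operator $\uframe{a}{P}$ in terms of the state-space coercions $P_{\uparrow a}$ and $P_{\downarrow a}$ used in Theorem~\ref{thm:utp-wp}. For $x \notin a \implies \uframe{a}{P} \nmods x$, the key fact is that $\uframe{a}{P}$ leaves the part of the state lying outside the region addressed by $a$ unchanged, so the side condition $x \notin a$ directly yields $x' = x$. For $P \nmods x \implies \uframe{a}{P} \nmods a{:}x$, the composite lens $a{:}x$ addresses, in the outer state, exactly the region that $x$ addresses in the inner state, so an inner-level invariance of $x$ under $P$ transports along the coercions to an outer-level invariance of $a{:}x$ under $\uframe{a}{P}$. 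The delicate points here are fixing the right well-formedness and independence side conditions on the namespace lens $a$ and on its composition with $x$, and checking that the coercions commute with lens lookup; I would isolate these as auxiliary lemmas about $\uframe{\cdot}{\cdot}$, $(\cdot)_{\uparrow a}$, $(\cdot)_{\downarrow a}$, and lens composition, after which both frame rules reduce to the same kind of relational argument as the preceding ones.
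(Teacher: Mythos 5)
Your proposal is correct and takes essentially the same route as the paper: the paper offers no textual proof of Theorem~\ref{thm:nmods} beyond the informal gloss following the statement and a pointer to the Isabelle/UTP mechanisation, and that mechanisation proceeds exactly as you describe --- reducing $P \nmods a$ to the entailment $P \implies a' = a$ and discharging each rule by relational and lens calculus, with the two frame rules handled via auxiliary lemmas about the coercions and the composite lens $a{:}x$. Nothing further is needed.
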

\noindent As expected, neither $\ckey{skip}$ nor $\ckey{abort}$ modify any variable $x$. Sequential composition,
$P \relsemi Q$, does not modify $x$ provided neither $P$ nor $Q$ does, and similarly for internal choice. Assignment to
$y$ does not modify $x$ provided that $x$ is independent of $y$ ($x \lindep y$), which effectively means that $y$ is not
part of $x$. A guarded command $b \longrightarrow P$ does not modify $x$ provided that $P$ also does not. For the frame
operator, $\uframe{a}{P}$, we identify two cases. If a variable $x$ is not in $a$, then clearly it is not
modified. Conversely, if $x$ is within the $a$ namespace, then it is necessary to check whether $P \nmods x$. Using
these laws we can automatically verify that a program does not modify certain variables, and so formalise $SFR6$ as follows.
\begin{definition} $FSFR6 \defs (adminTokenPresence = \mv{absent} \longrightarrow TISOp) \nmods \{config, floppy\}$ \isalink{https://github.com/isabelle-utp/utp-main/blob/7abe4b02af634ee70503afc39fab60f46a8cf954/casestudies/Tokeneer/Tokeneer.thy\#L3072}
\end{definition}
\noindent If we assume that there is not an admin token, then this means that $TISOp$ cannot modify either $config$ or
$floppy$. For the verification, we can distribute the absence precondition throughout the operations using the law
$b \longrightarrow (P \intchoice Q) = (b \longrightarrow P) \intchoice (b \longrightarrow Q)$. One admin operation can
modify $config$, namely $FinishUpdateConfigOK$ in Definition~\ref{def:admin-ops}. If we prefix this operation with
$adminTokenPresence = \mv{absent}$, we obtain the program $\ckey{abort}$ which does not modify $config$, since this
violates the second guard. We can also prove that for every other operation $P$, $P \nmods config$ deductively using
Theorem~\ref{thm:nmods}. Consequently, we can prove $FSFR6$.

\begin{theorem}[$FSFR6$ is provable] \isalink{https://github.com/isabelle-utp/utp-main/blob/7abe4b02af634ee70503afc39fab60f46a8cf954/casestudies/Tokeneer/Tokeneer.thy\#L3074} \label{thm:fsfr6}
\end{theorem}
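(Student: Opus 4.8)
The plan is to reduce $FSFR6$ to a family of per‑operation modification obligations and discharge each one structurally with Theorem~\ref{thm:nmods}. First I would unfold $TISOp$ into the non‑deterministic choice of all its constituents ($TISEnrolOp$, the user‑entry operations $TISUserEntryOp$, $TISAdminLogon$, $TISStartAdminOp$, the admin operations $TISAdminOp$, $TISAdminLogout$, and $TISIdle$), and push the guard $adminTokenPresence = \mv{absent} \longrightarrow (-)$ through the choice using the distributivity law $b \longrightarrow (P \intchoice Q) = (b \longrightarrow P) \intchoice (b \longrightarrow Q)$. Applying the $\intchoice$ rule of Theorem~\ref{thm:nmods} then splits the goal into one subgoal $(adminTokenPresence = \mv{absent} \longrightarrow P_i) \nmods \{config, floppy\}$ per operation $P_i$; since $\nmods \{config, floppy\}$ is equivalent to the conjunction of $\nmods config$ and $\nmods floppy$, each subgoal reduces further to two single‑lens obligations to which the rules of Theorem~\ref{thm:nmods} directly apply.

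For the operations that never syntactically mention $config$ or $floppy$ — the overwhelming majority, including every user‑entry operation — I would first drop the guard via the guard rule, then unfold the promotion $UEC(Op)$ and peel off the frame: by the two frame rules $\uframe{tis}{Op} \nmods x$ holds either because $x$ is outside the $tis$ namespace or, for the inner component, because $Op$ itself does not touch it. The residual inner obligation is purely compositional: $\ckey{skip}$ and $\ckey{abort}$ by their axioms, sequential composition and internal choice by their congruences, each assignment $v := e$ by the independence rule discharged by the lens fact $config \lindep v$ (resp. $floppy \lindep v$), and nested guards by the guard rule. All of this is syntax‑directed, so a tactic applying the rules of Theorem~\ref{thm:nmods} as introduction rules, followed by the standard Isabelle/UTP lens simplifier on the leftover independence side‑conditions, dispatches these cases automatically.

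The one genuinely interesting operation is $FinishUpdateConfigOK$ of Definition~\ref{def:admin-ops}, which is the unique operation whose body contains an assignment $config := \ldots$; analogously, the admin operation that writes audit data to the floppy is the only candidate affecting $floppy$. Here the structural discharge fails, so instead I would exploit the guard: $FinishUpdateConfigOK$ is enabled only when $adminTokenPresence = \mv{present}$, so conjoining the assumed $adminTokenPresence = \mv{absent}$ yields an unsatisfiable guard, and the guarded command collapses to $\ckey{abort}$. The $\ckey{abort}$ axiom of Theorem~\ref{thm:nmods} then closes this subgoal, and the same move handles the floppy‑writing operation, whose guard likewise demands that an admin token (and role) be present. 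Assembling all the subgoals yields $FSFR6$.

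I expect the main obstacle to be organisational rather than deep: identifying that $FinishUpdateConfigOK$ and its floppy counterpart are exactly the operations for which the structural argument does not go through, and recognising that in precisely those cases the admin‑presence guard contradicts the $adminTokenPresence = \mv{absent}$ assumption once it has been pushed inward. The remaining difficulty is volume — the many assignments across the thirty‑odd operations each spawn lens‑independence side‑conditions against $config$ and $floppy$ — but these are routine and fully automatable given the lens infrastructure, so in practice the proof amounts to one tactic invocation plus a short case analysis on the two configuration/floppy‑writing operations.
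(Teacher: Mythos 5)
Your proposal matches the paper's proof essentially step for step: distribute the $adminTokenPresence = \mv{absent}$ guard through the non-deterministic choice via $b \longrightarrow (P \intchoice Q) = (b \longrightarrow P) \intchoice (b \longrightarrow Q)$, discharge the bulk of the operations structurally with the rules of Theorem~\ref{thm:nmods} (including the frame rules for the $UEC$ promotion and lens independence for assignments), and observe that $FinishUpdateConfigOK$ collapses to $\ckey{abort}$ because the absence assumption contradicts its $adminTokenPresence = \mv{present}$ guard. Your additional remarks about the floppy-writing counterpart and the automatable lens side-conditions are consistent elaborations of the same argument.
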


We have now formalised and verified three of the SFRs. In the next section we place these in the context of an assurance
case.

\section{Mechanising the Tokeener Assurance Case}
\label{sec:tokassure}
\begin{figure}
  \begin{center}
    \includegraphics[width=.95\linewidth]{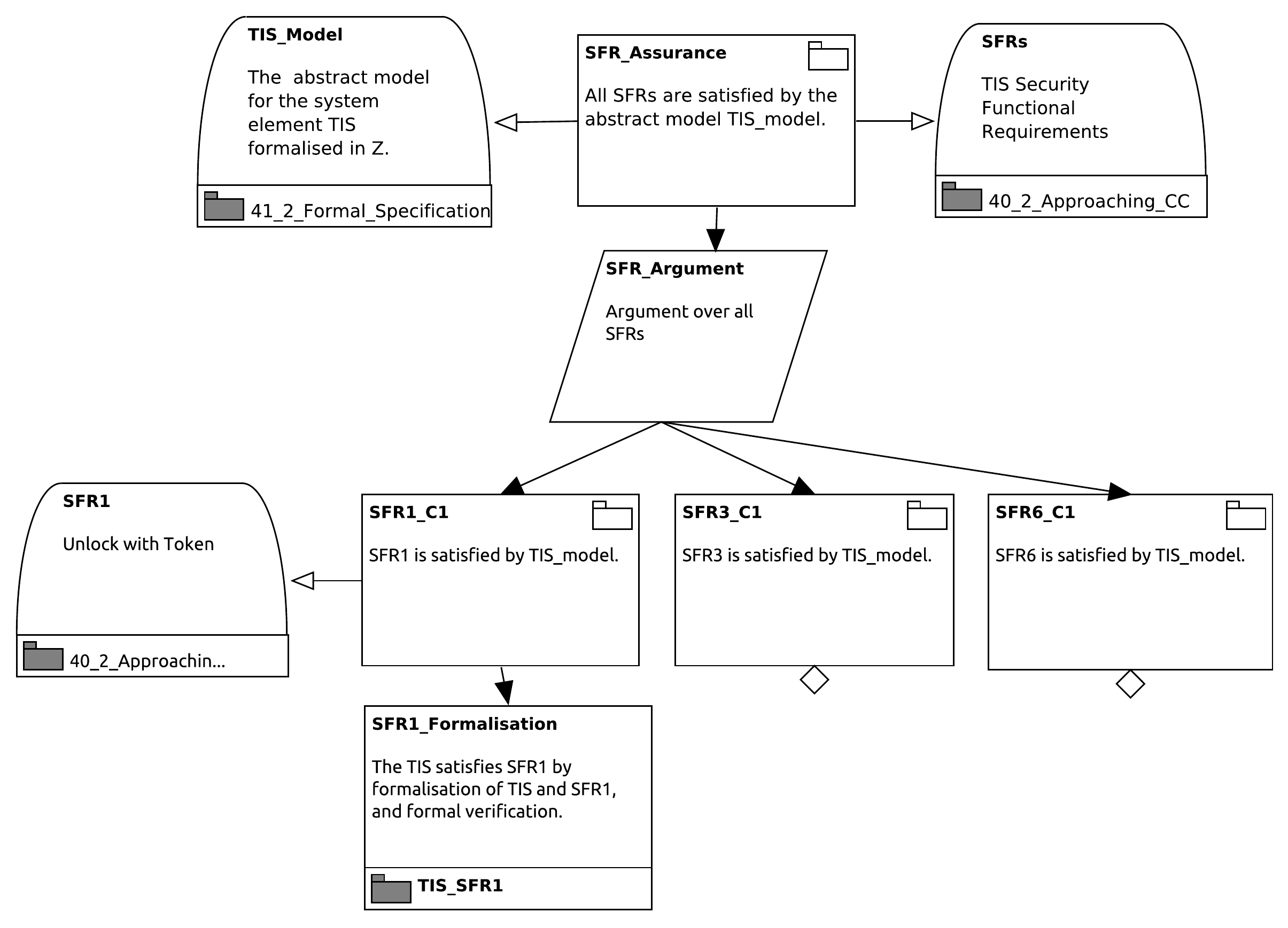}
  \end{center}

  \caption{Overall argument for SFR satisfaction ({\textsf{TIS\_SFRs}})}

  \label{fig:TIS_SFRs}
\end{figure}

In this section, we use ACME and \isacm to model the Tokeneer development process originally followed by Praxis, and
illustrated in Figure~\ref{fig:tis:proc}. A GSN diagram of the modular structure of the AC is shown in
Figure~\ref{fig:tis-modular}. The modules have names that correspond to the modelled artifact and a brief description.
Each module contains a mixture of lifecycle and certification artifacts, such as requirements and models, and GSN
arguments. The former were developed originally by Praxis and evaluated to comply with \ac{cc} EAL 5 in
the context of the certification process of TIS. While the certification artifacts were the contribution of Praxis, the
GSN AC argument modelling those artifacts is our contribution. Therefore, we complement the certification process of TIS
with a GSN model translated to \isacm, which aids the evaluation process of the certification artifacts by offering a
machine-checked argument structure with full traceability. Our work also provides guidelines on the use of modular GSN to
document the artifacts.

We focus on the module {\textsf{TIS\_SFRs}}, illustrated in Figure~\ref{fig:TIS_SFRs}. It encapsulates an argument for a
public claim that all SFRs are satisfied, which are defined in the module {\textsf{40\_2}}, by the TIS model, which is
defined in {\textsf{41\_2}}. We reference these artifacts with the use of \emph{away context} elements. For now, we
focus on the SFRs that we have formally verified, namely {\textsf{SFR1}}, {\textsf{SFR3}}, and {\textsf{SFR6}}. For
this, we use the Theorems~\ref{thm:tis-inv}, \ref{thm:fsfr1}, \ref{thm:fsfr3}, and \ref{thm:fsfr6} from
\S\ref{sec:model} as evidential artifacts. Satisfaction of {\textsf{SFR1}} is modelled by the claim {\textsf{SFR1\_C1}},
which uses {\textsf{SFR1}} as context. The claim is satisfied by formalisation, which is performed in another module
called {\textsf{TIS\_SFR1}}. Satisfaction of the other SFRs can be represented using the same pattern.

\begin{figure}
  \centering\includegraphics[width=.7\linewidth]{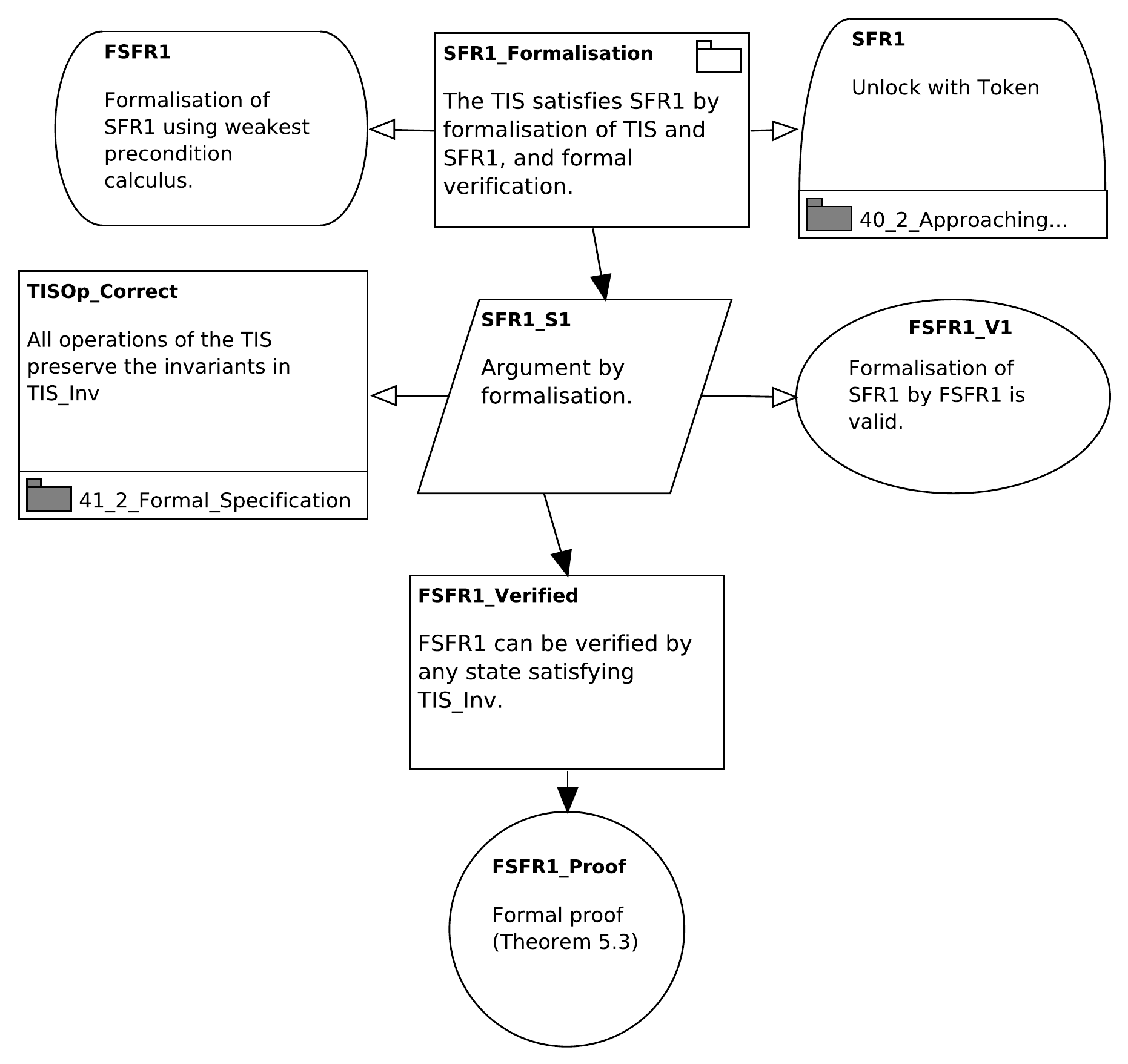}

  \caption{Satisfaction of SFR1 by formalisation}
  \label{fig:SFR1_Formalisation}
  
\end{figure}

The argument for {{SFR1}} is shown in Figure~\ref{fig:SFR1_Formalisation}. It uses the ``formalisation
pattern''~\cite{Denney2018}, which shows how results from a formal method can be used to provide evidence for claims to
satisfy a requirement \{R\} for a system element \{S\}. The strategy used to decompose the claim ``Requirement \{R\} is
met by \{S\}'' is contingent on the validation of both the formal specification of \{R\} and the formal model of \{S\}.
Consequently, the pattern breaks down the satisfaction of \{R\} into three claims stating that (1) the formal model of
\{S\} is validated, (2) the formalisation of \{R\} correctly characterises \{R\}, and (3) the formal model of \{S\}
satisfies the formalisation of \{R\}. The former two claims are usually satisfied by manual review, since they are not
amenable to formalisation, as the DO-178C formal methods supplement also makes clear:
\begin{center}
  \begin{minipage}{.9\linewidth}
  \textit{``Formal methods cannot show that derived requirements and the reason for their existence are correctly defined; this
  should be achieved by review.''}~\cite{DO-333}
\end{minipage}
\end{center}

In Figure~\ref{fig:SFR1_Formalisation}, we adapt Denney's pattern~\cite{Denney2018} as follows. We begin with the claim
{\textsf{SFR1\_Formalisation}}, which references both {\textsf{SFR1}}, with its natural language
description from module {\textsf{40\_1}}, and {\textsf{FSFR1}}, which is defined in this module and is
defined in Definition~\ref{def:fsfr1} from \S\ref{sec:model}. We then invoke an argumentation strategy,
{\textsf{SFR1\_S1}}, for formalisation. Instead of using a validation claim for the formalization of the
requirements, we use a justification element, {\textsf{FSFR1\_V1}}, which should be an explanation of how FSFR1
formalises SFR1. This is to preserve the well-formedness of the AC -- the ``requirement validation'' claims have a type
different from the ``requirement satisfaction'' claims.  An example of a ``requirement satisfaction'' claim is
{\textsf{SFR1\_Formalisation}}.

{\textsf{FSFR1}} also assumes that all the operations of the TIS preserve the system invariants, and so we record a link
to this proof in module {\textsf{41\_2}}, which corresponds to Theorem~\ref{thm:tis-inv}. The subclaim of
{\textsf{SFR1\_S1}} is {\textsf{FSFR1\_Verified}}, which is supported by the evidence {\textsf{FSFR1-Proof}}, which
refers to Theorem~\ref{thm:fsfr1}.

\begin{figure}
  \centering
  \includegraphics[width=.8\linewidth]{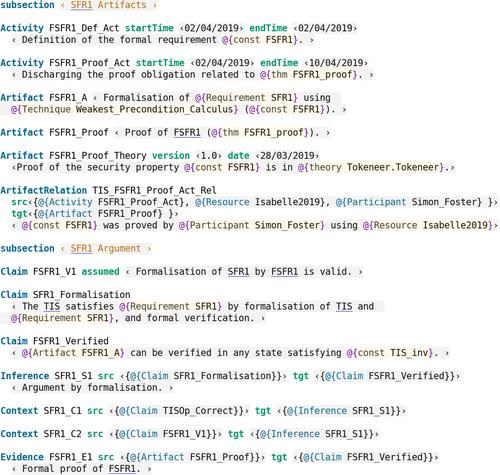}

  \caption{Argument and Artifacts for the FSFR1 Argument}
  \label{fig:SFR1-Isabelle}
\end{figure}

Figure~\ref{fig:SFR1-Isabelle} shows the IAL model of {\textsf{TIS\_SFR1}} that was manually translated and elaborated
from Figure~\ref{fig:SFR1_Formalisation}. In our translation, each of the modules in Figure~\ref{fig:tis-modular} is
assigned an Isabelle theory with the corresponding artifacts. We represent both the artifacts and argumentation elements
necessary to assure satisfaction of {\textsf{SFR1}}. Each command has an optional descriptive text, enclosed in quotes
\inlineisar+\<open>...\<close>+ that can integrate hyperlinks to both formal artifacts, such as theorems and proof, and
structured assurance artifacts, such as model elements generated by IAL. Since the checks done by IAL are successful, no
errors are issued in Figure~\ref{fig:SFR1-Isabelle}, which in particular indicates that every referenced artifact exists
and is correctly typed.

SACM provides several additional concepts for representing lifecycle artifacts, and we utilise them here. We record two
activities, {\textsf{FSFR1\_Def\_Act}} and {\textsf{FSFR1\_Proof\_Act}}, which represent the activities in the
development workflow for defining the formal requirement and discharging the proof obligations. Both have a
\textcolor{OliveGreen}{\inlineisar+startTime+} and \textcolor{OliveGreen}{\inlineisar+endTime+}
associated. {\textsf{FSFR1}} is represented by the artifact {\textsf{FSFR1\_A}}, which links to the IAL requirement
\textsf{SFR1}, which contains the natural language description of the requirement SFR1 from the Tokeneer documentation,
using the \inlineisar+Requirement+ antiquotation, and the technique
\textsf{Weakest\_Precondition\_Calculus}~\cite{Dijkstra75}. {\textsf{FSFR1\_A}} also contains a link to the
corresponding formal Isabelle constant via the antiquotation \inlineisar+@{const FSFR1}+.

We record a link to the proof of \textsf{FSFR1} in the artifact \textsf{FSFR1\_Proof}, and the Isabelle theory where
this resides in \textsf{FSFR1\_Proof\_Theory}. Finally, we create an artifact relation that gives the provenance for the
proof of \textsf{FSFR1}. This proof was performed by the participant \textsf{Simon\_Foster}, during the activity
\textsf{FSFR1\_Proof\_Act}, using the theorem prover \textsf{Isabelle2019}. In this way we record precisely how and when
a particular assurance artifact was created.

With the artifacts and their provenance defined, we move on to the argumentation. We first create the key claims using
the \textcolor{Blue}{\inlineisar+Claim+} command, which variously reference the artifacts previously defined. Claim
\textsf{FSFR1\_V1} is marked as \textcolor{OliveGreen}{\inlineisar+assumed+}, since this is the validation claim that
must be satisfied elsewhere by review.  The strategy {\textsf{SFR1\_S1}} from Figure~\ref{fig:SFR1_Formalisation}, is
modelled by \inlineisar+SFR1_S1+ in Figure~\ref{fig:SFR1-Isabelle}. \inlineisar+SFR1_S1+ is created using the command
\textcolor{Blue}{\inlineisar+Inference+}, which uses antiquotations to refer to the premise claims
\inlineisar+SFR1_Formalisation+, \inlineisar+TISOp_Correct+, and \inlineisar+FSFR1_V1+, that is, the source
\textcolor{OliveGreen}{\inlineisar+src+}, and the conclusion claim \inlineisar+FSFR1_Verified+, that is the target
\textcolor{OliveGreen}{\inlineisar+tgt+}.

We use the \textcolor{Blue}{\inlineisar+Context+} command to model the two contextual relations in
Figure~\ref{fig:SFR1_Formalisation}. \textsf{SFR1\_C1} presents the external claim \textsf{TISOp\_Correct} as context,
which refers to the invariant proof (Theorem~\ref{thm:tis-inv}), and \textsf{SFR1\_C2} presents the assumed validation
claim as context. Finally, we model the relationships from Figure~\ref{fig:SFR1_Formalisation} that link
{\textsf{FSFR1\_Verified}} to {\textsf{FSFR1\_Proof}}. This is done in Figure~\ref{fig:SFR1-Isabelle} by
\inlineisar+FSFR1_E1+, which is created using the command \textcolor{Blue}{\inlineisar+Evidence+}. It supports the claim
\inlineisar+FSFR1_Verified+ with the artifact \inlineisar+FSFR1_Proof+.

We have shown how \isacm enables the integration of formal development with assurance argumentation, documenting how the
evidence collected establish the overall security claims. In the next two sections we survey related work and discuss the
findings of our case study.

\section{Related Work}
\label{subsec:relatedWork}

In this section, we discuss previous efforts in the verification of
Tokeneer as well as other approaches to the formalisation of assurance
cases and the integration of formal methods with assurance cases.

\paragraph{Comparison with Previous Efforts in the Verification of
Tokeneer}

Woodcock et al.~\cite{Woodcock2010-TokeneerExperiments} highlight
defects of the Tokeneer SPARK implementation, indicate undischarged
verification conditions, and perform robustness tests generated by the
Alloy SAT solver~\cite{Jackson2000-AlloyLightweightObject} from a
corresponding Alloy model.  Using De Bono's lateral thinking, these
test cases go beyond the anticipated operational envelope and
stimulate anomalous behaviours.  In shortening the feedback cycle for
verification and test engineers, theorem proving in form of the
proposed framework can help using this approach more intensively.

Abdelhalim et al.~\cite{Abdelhalim2010-FormalVerificationTokeneer}
model part of the Tokeneer specification using UML activity diagrams
translated to be checked for deadlock freedom by the CSP model checker
FDR.  Their formalisation assumes to be implemented on top of
asynchronous communication, modelled in CSP in terms of buffers for
each channel between UML components.  While our abstraction from such
communication aspects yields a simpler proof of the
\acp{sfr} in Section~\ref{props:TIS}, their deadlock checking at a
lower level can be useful for checking correctness of the
communication in an implementation of the UML model such as the SPARK
implementation mentioned in Figure~\ref{fig:tis:proc}.  Their UML
diagrams can lead to comparatively large specifications whereas our
formalisation stays compact thanks to the abstraction and reuse
mechanisms in Z schema and \iutp.

Rivera et al.~\cite{Rivera2016-Undertakingtokeneerchallenge} present
an Event-B model of the TIS, verify this model, generate Java code
from it using the Rodin tool, and test this code by JUnit tests
manually derived from the specification.  The tests validate the model
in addition to the Event-B invariants derived from the same
specification, and aim to detect errors in the Event-B model caused by
misunderstandings of the specification.  Using Rodin, the authors
state that they verify the \acp{sfr}~(Section~\ref{props:TIS}) using
Hoare triples.
Our work uses a similar abstract machine specification, but with
weakest precondition calculus as the main tool for verifying the
\acp{sfr}.
Beyond the replication of the Tokeneer case study, Rivera et
al.~\cite{Rivera2016-Undertakingtokeneerchallenge} deal with the
relationship between the model and the code via testing, whereas we
focus on the construction of certifiable assurance arguments from
formal model-based specifications.  Nevertheless, we believe
Isabelle's code generation features could be applied in a similar way.

\paragraph{Previous Work on Formal Assurance and Formalised Assurance Cases}

In concordance with Woodcock et
al.'s~\cite{Woodcock2010-TokeneerExperiments} observations, several
researchers have investigated ways of introducing formality into
assurance cases~\cite{Cruanes2013,Rushby2014,Denney2018,Diskin2018}.
We highlight some of these approaches below.

AdvoCATE is a powerful graphical tool for the construction of
GSN-based safety cases~\cite{Denney2018}.  It uses a formal foundation
called argument structures, which prescribe well-formedness checks for
the syntactic structure of (i.e.~the graph underlying) an \ac{ac}, and
allow instantiation of assurance case patterns.  Our work likewise
ensures well-formedness, but also allows the embedding of content with
formal semantics.  Denney and Pai's formalisation
pattern~\cite{Denney2018} is an inspiration for our work.
Our framework is to be used as an assurance backend, which complements
AdvoCATE with a deep integration of modelling and specification
formalisms.

Rushby~\cite{Rushby2014} illustrates how assurance arguments can be
formalized with modern verification systems such as Isabelle or PVS to
overcome some of the logical fallacies associated with informal
\acp{ac}.  Similarly, our framework allows reasoning using formal
logic, but additionally supports the combination of formal and
informal artifacts.  We drew inspiration from the work on the
Evidential Tool Bus~\cite{Cruanes2013}, which enables the combination
of evidence from several formal and semi-formal analysis tools.  In a
very similar way, Isabelle supports the integration of a variety of
formal analysis tools~\cite{Wenzel2007}.

Diskin et al.~\cite{Diskin2018} tackle the problem of hierarchical and
modular assurance by using a formal model~(in this case, a
compositional data-flow model) of the system to be assured as the basis
for generating evidence required for a particular
assurance claim.  Their framework is elaborate and practically 
relevant inasmuch as it integrates well with the practice of
model-based development.  The paradigm of our approach is similar to
theirs except that we use mechanised algebraic reasoning techniques,
provide computer-assistance for the proposed reasoning steps,
integrating informal assurance evidence.

Overall, we believe that our work is the first to put formal
verification effort into the wider context of structured assurance
argumentation, in our case, a machine-checked security case using
\isacm.  We have also recently applied our techniques to collision
avoidance for autonomous ground robots~\cite{Gleirscher2019-SEFM} and
an autonomous underwater vehicle~\cite{foster2020formal}; both of
which are more recent benchmark examples.

\section{Findings and Limitations of the Case Study}
\label{sec:disc}

Below, we summarise several observations and findings from our
investigation.

\paragraph{Evaluation of the Tokeneer Assurance Case}

Despite its age, we see Tokeneer as a highly relevant benchmark
specification, particularly since it is one of the grand challenges of
the ``Verified Software Initiative''~\cite{Woodcock2006}.  As we have
argued elsewhere~\cite{Gleirscher2018-NewOpportunitiesIntegrated},
such benchmarks allow us to conduct objective analyses of assurance
techniques to aid their transfer to other domains.  The issues
highlighted in \cite{Woodcock2010-TokeneerExperiments} are systematic
design problems that can be fixed by a change of the
benchmark~(e.g.~by a two-way biometric identification on both sides of
the enclave entrance).  However, this is out of scope of our work and
does not harm Tokeneer in its function as a benchmark.

During the translation from Z into \iutp's \ac{gcl} and the formalisation
of the \acp{sfr}, we identified some deficiencies in the way that the
security requirements were originally proven. In particular, as we
have previously mentioned, the developers acknowledge that there are
missing invariants necessary to support the proofs:
\begin{center}
\begin{minipage}{.9\linewidth}
\textit{We have not [added the invariants], as we believe it will add little to the
  assurance of correctness, and is very time consuming. At higher
  levels of the \ac{cc} assurance we would be required to carry out more
  formal proofs, in which case these modifications would be
  done.}~\cite[page~11]{TIS-SecurityProperties}
\end{minipage}
\end{center}
One of the reasons we can now do this is because automation of formal
proof has vastly improved since the development of
Tokeneer. Consequently, we can reach these higher assurance levels
with our mechanisation.

A further issue is that we could not prove \ac{sfr}2 while staying
faithful to its proposed formalisation in the benchmark
artifacts~\cite[page~6]{TIS-SecurityProperties}. This property states
that, at the point of unlocking the door, the time must be close to
being within the permitted entry period. Like \ac{sfr}1, it uses the
operation $TISOpThenUpdate$ as the target for the
verification. However, this operation does not allow $currentTime$,
which internally records the time, to advance. The advance of internal
time occurs only when polling $currentTime$ from the corresponding
monitored variables $now$, using $TISPoll$, which can advance
arbitrarily. Consequently, an invariant of time can be trivially
satisfied, because time is constant. A fix for the issue would require
us to reason about $TISPoll$, and thus a more substantial proof

\paragraph{Formal Design and Refinement}

As shown in Figure~\ref{fig:tis:proc} on page~\pageref{fig:tis:proc},
a complete assurance case of the \ac{tis} development would require
the coverage of all three refinement steps described in
\cite{TIS-SummaryRep}, the functional or abstract \emph{formal
  specification}, the more concrete \emph{formal design}, and the
\emph{SPARK implementation}.
The formal design is a data and operation refinement of the abstract
types used in the formal specification, replacing sets and functions
with data structures with operational semantics.
Such refinement proofs would require formal reasoning about the memory
models of the formal design and the SPARK implementation in \iutp.
This reasoning can be based on separation logic like, for example,
implemented in the Isabelle data refinement
library~\cite{Lammich2017-RefinementImperativeHOL}.

\section{Conclusions}
\label{sec:conclusion}

We have presented \isacm, a framework for integrating formal proof into a unified and standardised form of assurance
cases and for their computer-assisted construction. We showed how SACM is embedded into Isabelle as an ontology, and
provided an interactive assurance language that guides its user in generating valid instances of this ontology.

We applied this framework to part of the Tokeneer security case, including the verification of three of the security
functional requirements, and embedded these results into a mechanised assurance argument. \isacm enforces the usage of
formal ontological links---a feature inherited from \idof---which establish and enrich traceability between the
assurance arguments, evidence of different provenance, and the assurance claims.  \isacm combines features from \ihol,
\idof, and \sacm in a way that allows integration of formal methods and \aclp{ac}~\cite{Gleirscher2019-SEFM}. In sum,
our work allows us to intertwine a heterogeneous formal development in Isabelle with an assurance case that puts the formal
results in context.

In future work, we will formalise the connection between ACME~\cite{Wei2019-SACM} and \isacm, which will make the
platform more accessible to safety practitioners. We are currently working on a prototype model-to-text transformation
from SACM to Isabelle to facilitate this, and an integration with Eclipse to allow feedback from Isabelle to be
propagated back to the diagram editors.
We will also consider the integration of \ac{ac} pattern execution~\cite{Denney2018}, to facilitate \ac{ac}
production. Moreover, to support more advanced safety analysis, we are exploring the use of \idof to develop an
ontology for safety concepts, such as hazards, risks, and control measures, following work by
Banham~\cite{Bahham2020-RiskOntology} and the Safety of Autonomous Systems Working Group~\cite{ASObjectives}, and also
ontologies for formal methods. Indeed, we envisage the development of a variety of ontologies that provide the necessary
terminology to formulate requirements, document developments, and otherwise aid communication.

We also plan complete the mechanisation of the TIS security case, including the overarching argument for how the formal
evidence can satisfy the requirements of \ac{cc}~\cite{CC2017-CommonCriteriaInformationPt1}. This will involve
mechanising the remaining operators, and tackling the last three security requirements, which will require derivation
and verification of additional invariants. We are also applying \isacm to develop an assurance case for an autonomous
underwater vehicle safety controller~\cite{foster2020formal}, which is being developed under the regime of the DO-178C
standard.

In parallel, we are developing our verification framework,
Isabelle/UTP~\cite{Foster2020-IsabelleUTP,Foster16a,Foster19a-IsabelleUTP} to support a variety of software engineering
notations.  We recently demonstrated formal verification facilities for a StateChart-like
notation~\cite{Foster18b,Foster17c}, and are also working towards tools to support hybrid dynamical
languages~\cite{Foster16b,Foster19c-HybridRelations,Foster2020-dL} like Modelica and MATLAB Simulink. Though these kinds
of models seem quite different to Tokeneer, in the UTP they all have a relational semantics and so the development of
proof facilities here feed into these works. 

Our long-term overarching goal is a comprehensive assurance framework supported by a variety of integrated \acp{fm}, in order to
support complex certification tasks for cyber-physical systems such as autonomous
robots~\cite{Gleirscher2018-NewOpportunitiesIntegrated,Gleirscher2019-SEFM,foster2020formal}.

\paragraph{Acknowledgments}

This work is funded by EPSRC projects CyPhyAssure\footnote{CyPhyAssure
  Project: \url{https://www.cs.york.ac.uk/circus/CyPhyAssure/}} (grant
reference EP/S001190/1) and RoboCalc (grant reference EP/M025756/1),
the German Science Foundation~(DFG; grant 381212925), and the Assuring
Autonomy International Programme~(AAIP; grant CSI:Cobot).

\bibliographystyle{alpha}
\bibliography{FAOC2020-SACM}

\end{document}